\documentclass[12pt]{article}
\usepackage{}
\usepackage{mathrsfs}
\usepackage{amsfonts}
\usepackage{amsmath}
\usepackage{amssymb}
\usepackage{natbib}
\usepackage[all]{xy}
\usepackage{graphicx}
\usepackage{subfigure}
\usepackage{url}
\usepackage{fancyhdr}
\usepackage{indentfirst}
\usepackage{enumerate}
\usepackage{dsfont,footmisc}
\usepackage{CJK}
\usepackage{color}
\usepackage[british]{babel}
\usepackage[colorlinks=true,citecolor=blue]{hyperref}
\def\d{\mathrm{d}}
\newcommand{\R}{\mathbb{R}}
\renewcommand{\(}{\left(}
\renewcommand{\)}{\right)}

\newtheorem{theorem}{Theorem}[section]

\newtheorem{corollary}{Corollary}[section]

\newtheorem{definition}{Definition}[section]

\newtheorem{lemma}{Lemma}[section]

\newtheorem{remark}{Remark}[section]

\newcommand{\RV}{\mathcal{RV}}
\newenvironment{proof}[1][Proof]{\noindent \textbf{#1.} }{\  \rule{0.5em}{0.5em}}

\textheight=8.6truein \textwidth=6.3truein \topmargin -.3in
\oddsidemargin .1in \baselineskip=0.3in

\begin{document}
\baselineskip=18pt
\title{Second order asymptotics for discounted aggregate claims of continuous-time renewal risk models with constant interest force\thanks{This work  is supported by the National Social Science Fund of China (24BTJ034), the Provincial Natural Science Research Project of Anhui Colleges (KJ2021A0060, KJ2021A0049, 2022AH050067, 2024AH050037) and the  doctoral research initiation fund of Anhui University.}}
\author{ Bingzhen Geng$^a$, Shijie Wang$^a$ and Yang Yang$^b$ \\
{\small $^a$ School of Big Data and Statistics, Anhui University, Hefei, Anhui, 230601, China;}\\
{\small $^b$ School of Statistics and Mathematics, Nanjing Audit University, Nanjing, Jiangsu, 211815, China}}
\date{}
\maketitle
\begin{minipage}{155mm}
\indent{\bf
Abstract:}~This paper investigates the second order asymptotic expansion for tail probabilities of discounted aggregate claims in continuous-time renewal risk
models with constant interest force. Concretely, two types of continuous-time renewal risk models without and with by-claims are separately discussed. By constructing the asymptotic theory and weighted Kesten-type inequality of randomly weighted sums for second order subexponential random variables, second order asymptotic formulae for these two risk models are firstly built. In comparison of the first order asymptotic formulae, our results are more superior and precise, which are demonstrated by some simple numerical studies.

{\bf Keywords:}~~Second order asymptotic expansion; renewal risk model; discounted aggregate claim; second order subexponential distribution

{\bf Mathematics Subject Classification:}~~62P05; 62E10; 91B30

\end{minipage}

\setcounter{equation}{0}
\section{Introduction}

In consideration of insurance practice, assume that an insurer who sells disaster (such as earthquakes, storms, and severe accidents) insurance policies and makes riskless investments. It is known that continuous-time renewal risk models with constant interest force are usually adopted to describe this type of insurance business. Historically, the continuous-time renewal risk models with constant interest force have been studied by many papers. For instance, \cite{ST1995} derived upper and lower bounds of infinite-time ruin probability by martingale methods in the presence of light-tailed claim sizes. Assuming that the claim-arrival process is a homogeneous Poisson
process and the claim sizes are heavy-tailed, \cite{KS1998} obtained an asymptotic estimation of infinite-time ruin probability. Subsequently, \cite{T2005} achieved a uniformly asymptotic formula for finite-time ruin probability under the assumption that the claim sizes are subexponential and the claim-arrival process is a compound Poisson process. \cite{T2007} get an asymptotic formula of infinite-time ruin probability when the claim sizes belong to extended-regularly-varying tailed class. By means of asymptotic results for randomly weighted sums of subexponential random variables (r.v.s), \cite{HT2008} gained asymptotic estimates for discounted aggregate claims of finite-time and infinite-time ruin probabilities in continuous-time renewal risk models with subexponential claims. In their model, the claim sizes are assumed to be a sequence of independent and identically distributed (i.i.d.) r.v.s and the inter-arrival times form another sequence of i.i.d. nonnegative r.v.s, independent of the claim sizes. Recently,  in order to better fit the actual insurance situation, researchers have been interested in imposing various dependence structures among the claim sizes or the inter-arrival times or between them. For example, \cite{CN2007}, \cite{AB2010}, \cite{LTW2010}, among many recent others.

Moreover, another new research trend is about taking both main and by-claims into account into continuous-time renewal risk models, which have attracted much attention in the past decades. In reality, in the event of natural and man-made disasters such as floods and traffic accidents, other insurance claims are likely to occur after the immediate ones. Such a phenomenon leads to consider the classical risk models with by-claims, which was firstly proposed by \cite{WP1985} in a discrete-time case and by \cite{YG2001} in a continuous-time case. For some recent studies on this field, the reader is referred to \cite{YGN2005}, \cite{WL2012}, \cite{Li2013},  \cite{YL2019}, \cite{GZH2019}, \cite{LCF2021}, \cite{LY2022}, \cite{WYLY2023}, among many others. Particularly, for unidimensional continuous-time renewal risk models, \cite{YL2019} obtained some precise asymptotic expansions for finite-time ruin probability with subexponential main and by-claims. \cite{GZH2019} established a uniformly asymptotic formula for finite-time ruin probability of risk models perturbed by diffusion in the presence of long-tailed and dominatedly-varying claim sizes. \cite{LCF2021} derived asymptotic estimate for finite-time ruin probability of time-dependent risk models with subexponential main and by-claims. Whereas, for bidimensional continuous-time renewal risk models, \cite{LY2022} and \cite{WYLY2023} investigated four types of asymptotic finite-time ruin probabilities with subexponential main and by-claims, respectively.

However, for the literature mentioned above, we remark that all the approximations obtained are based on the first order asymptotic behavior. With the increasing growth of the insurance industry, there is a growing need of the exploration and examination of second (or higher) order estimates for enhancing risk analysis. This requirement stems from the collective expectations of insurers and regulators, who demand greater accuracy in assessing risks. The second order expansions for tail probability of aggregate claims have been widely studied under second order conditions, among which second order subexponentiality (denoted by $\mathscr{S}_2$) and second order regular
variation (denoted by $2\RV$) are the most accepted assumptions. Moreover, the utilization of $\mathscr{S}_2$ and $2\RV$  has found widespread applications in various domains, including applied probability, statistics, risk management, telecommunication networks, and numerous other fields. See \cite{OW1986}, \cite{K1988}, \cite{K1989}, \cite{GP1991}, \cite{DR1996}, \cite{DF2006} for more details.

This paper focuses on the second order subexponential distributions which was proposed by \cite{L2012} and studied by numerous scholars subsequently. For the convenience of presentation, we first recall some related definitions and notations. For any distribution function $F$, denote by $\overline F(x)=1-F(x)$ its tail. Assume that $\overline F(x)>0 $ holds for all $x > 0$. For $t>0$, write $\Delta(t)=(0,t]$,
\begin{eqnarray*}
x+\Delta(t)=(x,x+t]
\end{eqnarray*}
and
\begin{eqnarray*}
F(x+\Delta(t))=F(x,x+t]=F(x+t)-F(x).
\end{eqnarray*}
Hereafter, all limit relationships are according to $x\rightarrow\infty$ unless stated otherwise. For two positive functions $a(\cdot)$ and $b(\cdot)$, write $a(x)\sim b(x)$ if $\lim{a(x)}/{b(x)}=1$. It is known that a distribution $F$ on $[0,+\infty)$ is classified as a distribution of the subexponential class, denoted by $\mathscr{S}$, if
\begin{eqnarray*}
{\overline {F^{2 * }}}(x) \sim 2\overline F(x),
\end{eqnarray*}
where $F^{2*}$ denotes $2$-fold convolution of the distribution $F$ with itself. The class $\mathscr{S}$ was first introduced by \cite{C1964}. For more properties of $\mathscr{S}$ and some related classes, we refer the readers to \cite{EKM1997} and \cite{FKZ2013}. A distribution $F$ on $(-\infty,\infty)$ is classified as a distribution of the local long-tailed class $\mathscr{L}_\Delta$, if, for any $t>0$,
\begin{eqnarray*}
F(x+y+\Delta(t))\sim F(x+\Delta(t))
\end{eqnarray*}
holds uniformly in $y \in [0,1]$. The class $\mathscr{L}_\Delta$ was firstly proposed by \cite{AFK2003}. Below are the definitions of second order subexponential distributions supported on the nonnegative half line and the whole real line which were first proposed in \cite{L2012} and \cite{L2014}, respectively.

\begin{definition}
A distribution $F$ on $[0,\infty)$ with finite mean $\mu_F$ is said to belong to the second order subexponential class, denoted by $F\in\mathscr{S}_{2}$, if $F \in \mathscr{L}_{\Delta}$ and
\begin{eqnarray}
\overline {{F^{2 * }}} (x) - 2\overline F(x) \sim 2{\mu _F}F(x,x + 1].\nonumber
\end{eqnarray}
\end{definition}

\begin{definition}
A distribution $F$ on $(-\infty,\infty)$ with finite mean $\mu_F$ is said to belong to the second order subexponential class, denoted by $\widetilde {{\mathscr S}_2}$, if $F^+ \in \mathscr{S}_{2}$ and $F(x,x+1]$ is almost decreasing, where $F^+(x)=F(x){\bf 1}_{\{x\geq 0\}}$ and ${\bf 1}_E$ is the indicator function of a set E.
\end{definition}

As stated in \cite{L2014}, a significant condition imposed on the second order subexponential distributions supported on $(-\infty,\infty)$ is that the local probabilities of these distributions should be almost decreasing. In addition, the second order subexponential class is quite large, which contains many commonly-used distributions, such as Pareto, Lognormal and Weibull (with parameter between 0 and 1) distributions. Furthermore, it is known that $\widetilde {{\mathscr S}_2} \subset \mathscr S$. See \cite{L2012} for details.

For the continuous-time renewal risk models without constant interest force considered, \cite{L2012} derived a second order precise result in an ordinary renewal risk model under the condition that the equilibrium distribution of claims is second order subexponential. \cite{YWC2022} obtained second order asymptotics for infinite-time ruin probability in a compound renewal risk model. For the continuous-time renewal risk models with constant interest force and by-claims considered, more recently, \cite{Lin2021} achieved a second order asymptotic expansion for an ordinary renewal risk model with by-claims, whereas it is also assumed that the equilibrium distribution of claims belong to the second order subexponential distributions. In the view of practice, it is easy to see that the assumption imposed on the claims themselves but not on the equilibrium distribution of claims is more natural and convenient to verify. Besides, \cite{YLY2022} investigated second order tail behavior for stochastic discounted value of aggregate net losses in a discrete-time risk model.

To the best of our knowledge, up to now, there is no work studying the second order asymptotics for discounted aggregate claims of continuous-time renewal risk models with constant interest force and second order subexponential claim sizes. Inspired by this, this paper is devoted to investigating the second order asymptotic expansions of two types of continuous-time renewal risk models without and with by-claims separately. By constructing the asymptotic theory and weighted Kesten-type inequality of randomly weighted sums for second order subexponential r.v.s, this paper first obtains a uniformly second order asymptotic estimation for discounted aggregate claims (Theorem \ref{the:claim} below) of continuous-time renewal risk models without by-claims. Next, under the assumption that the main and by-claims are second order subexponential and locally weak equivalent, the second order asymptotic estimation for discounted aggregate claims (Theorem \ref{the:by-claim} below) of continuous-time renewal risk models with by-claims is also constructed. It is worth mentioning that our obtained result (Theorem \ref{the:by-claim} below) is different from the one in \cite{Lin2021} since it is assumed that the equilibrium distribution of claims is second order subexponential in his paper.

The rest of this paper is organized as follows. Section \ref{sec:claim} gives the second order asymptotic expansion for discounted aggregate claims of continuous-time renewal risk models without by-claims. Section \ref{sec:by-claim} preforms the second order asymptotic expansion for discounted aggregate claims of continuous-time renewal risk models with by-claims. Section \ref{sec:simulation} shows some simulations to illustrate the superiority and preciseness of our results in comparison of the first order asymptotics. Section \ref{sec:conclusion} concludes this paper. Section \ref{sec:appendix} presents some necessary lemmas and the proofs of main results.

\setcounter{equation}{0}
\section{Second order asymptotic for continuous-time renewal risk models without by-claims}\label{sec:claim}

In this section,  consider a continuous-time renewal risk model with interest force, in which claim sizes $\{X_k,k\geq 1\}$ constitute a sequence of i.i.d. and real-valued r.v.s with common distribution $F$, while their arrival times $\{\tau_k, k\geq 1\}$, independent of $\{X_k,k\geq 1\}$, constitute a renewal counting process $N(t)=\sup\{i\in \mathbb{N}:\tau_{i}\leq t\}$ representing the claim-arrival process with finite renewal function $\lambda (t) = EN(t) = \sum_{i = 1}^\infty {P({\tau _i} \le t)}$. The inter-arrival times $\theta_1=\tau_1,\theta_k=\tau_k-\tau_{k-1}, k=2, 3, \ldots $ constitute another sequence of i.i.d. nonnegative and not-degenerate-at-zero r.v.s. Assume that there is a constant force of interest $r \geq 0$. Then, under these settings, the discounted aggregate claims are expressed as the following stochastic process
\begin{align}\label{eq:Dr}
D_r(t)=\int_{0-}^te^{-rs}\d X(s)=\sum_{k=1}^\infty X_ke^{-r\tau_k}\mathbf{1}_{\{\tau_k\leq t\}}.
\end{align}
 For the convenience of later use, define $\Lambda  = \{ t:\lambda (t) > 0\}  = \{ t:P({\tau _1} \le t) > 0\} $ and  write $\Lambda_T =\Lambda\cap[0,T] $. In addition, we also assume that all the random resources appearing in \eqref{eq:Dr} are mutually independent. For model (2.1), \cite{HT2008} investigated the first order asymptotic of tail probability of the stochastic process \eqref{eq:Dr} in the presence of $F\in\mathscr{S}$. They proved that, for arbitrary fixed $T\in\Lambda$, the relation below holds uniformly for all $t \in \Lambda_T$ that
\begin{align*}
P\(D_r(t)>x\)\sim\int_{0-}^t\overline{F}(xe^{ru}) \lambda (\d u).
\end{align*}

Commonly, the first order asymptotic is crude. Thus, studying second order asymptotic of tail probability for aggregate claims becomes more valuable. For the convenience of our presentation, we use the following notations. Let
\begin{align*}
\varphi_{F;\lambda,\lambda}(x;t)&= \int_{{0^ - }}^t \int_{{0^ - }}^{t - v} \Big({{e^{ - rv}}} F(x{e^{r(u + v)}},(x + 1){e^{r(u + v)}}]+ {{e^{ - r(u + v)}}F(x{e^{rv}},(x + 1){e^{rv}}]} \Big)\lambda (\d u) \lambda (\d v).
\end{align*}

Here comes our first main result.

\begin{theorem}\label{the:claim}
Consider the discounted aggregate claims described in \eqref{eq:Dr}. If $F\in\widetilde {{\mathscr S}_2}$, then, for arbitrary fixed $T\in\Lambda$, it holds uniformly for all $t \in \Lambda_T$ that
\begin{align*}
P\(D_r(t)>x\)=\int_{0-}^t\overline{F}(xe^{ru}) \lambda (\d u)+\mu_F\varphi_{F;\lambda,\lambda}(x;t)+o\(1\)\int_{0-}^tF(xe^{ru},(x+1)e^{ru}]\lambda (\d u).
\end{align*}
\end{theorem}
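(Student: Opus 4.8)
The plan is to reduce the continuous-time statement to a finite-sum randomly weighted problem and then apply the second order subexponential machinery for randomly weighted sums announced in the introduction. First I would condition on the renewal counting process: on the event $\{N(t)=n\}$, one has $D_r(t)=\sum_{k=1}^n X_k e^{-r\tau_k}$, a randomly weighted sum of the i.i.d.\ claims $\{X_k\}$ with weights $\Theta_k=e^{-r\tau_k}\in(0,1]$, the weights being independent of the claims. The uniformity over $t\in\Lambda_T$ will be handled exactly as in \cite{HT2008}: choose $N$ large enough that the tail contribution of $\{N(T)>N\}$ is negligible (here at the \emph{second order} level, using a weighted Kesten-type inequality to dominate $P(\sum_{k>N}X_k\Theta_k>x)$ by something of smaller order than $\int_{0-}^t F(xe^{ru},(x+1)e^{ru}]\lambda(\d u)$ uniformly in $t$), so that it suffices to treat the truncated sum $\sum_{k=1}^{N} X_k e^{-r\tau_k}\mathbf 1_{\{\tau_k\le t\}}$.

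Next I would invoke the core auxiliary result — the second order asymptotic expansion for a randomly weighted sum $S_n^\Theta=\sum_{k=1}^n X_k\Theta_k$ with $F\in\widetilde{\mathscr S}_2$ and weights supported on a bounded interval — which should read, schematically,
\begin{align*}
P(S_n^\Theta>x)=\sum_{k=1}^n E\overline F(x/\Theta_k)+\mu_F\sum_{1\le i\ne j\le n}E\big[\Theta_j\, F(x/\Theta_i,(x+1)/\Theta_i]\big]+o(1)\sum_{k=1}^n E F(x/\Theta_k,(x+1)/\Theta_k],
\end{align*}
this being the weighted analogue of Lin's expansion and the heart of the paper's "asymptotic theory of randomly weighted sums for second order subexponential r.v.s." Substituting $\Theta_k=e^{-r\tau_k}$, so that $x/\Theta_k=xe^{r\tau_k}$, and then taking expectation over the arrival times $\tau_1,\dots,\tau_n$ (which reduces to integrating against the renewal measure $\lambda(\d u)$, since $E[\#\{k:\tau_k\in\d u,\ \tau_k\le t\}]=\lambda(\d u)$ on $[0,t]$ and the pair $(\tau_i,\tau_j)$ with $i<j$ has the successive increments distributed so that the ordered pair integrates against $\lambda(\d u)\lambda(\d v)$ on $\{u+v\le t\}$ after the change of variables $v=\tau_j-\tau_i$), the three terms become respectively $\int_{0-}^t\overline F(xe^{ru})\lambda(\d u)$, the double-integral expression $\mu_F\varphi_{F;\lambda,\lambda}(x;t)$ (the two summands inside $\varphi$ corresponding to the two orderings $i<j$ and $i>j$, weighted by $e^{-rv}$ resp.\ $e^{-r(u+v)}$), and the error term $o(1)\int_{0-}^t F(xe^{ru},(x+1)e^{ru}]\lambda(\d u)$. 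Assembling these, letting $N\to\infty$, and checking that the remainders from the truncation and from the $o(1)$ terms are all uniformly controlled by $\int_{0-}^t F(xe^{ru},(x+1)e^{ru}]\lambda(\d u)$ finishes the argument.

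The main obstacle, as usual with these ``interest-force plus renewal'' models, will be the uniformity in $t\in\Lambda_T$ together with the tail truncation at the second order scale. One must show that $\sum_{k>N}X_ke^{-r\tau_k}\mathbf 1_{\{\tau_k\le t\}}$ contributes $o(1)\int_{0-}^t F(xe^{ru},(x+1)e^{ru}]\lambda(\d u)$ uniformly in $t\le T$ and uniformly as $x\to\infty$; at the \emph{first} order this is a standard dominated-convergence-plus-Kesten estimate, but at second order it requires the sharper weighted Kesten-type inequality for $\widetilde{\mathscr S}_2$ distributions (giving a bound like $P(S_n^\Theta>x)\le C(1+\varepsilon)^n\,\overline F(x)$ together with a matching control on the local quantity $P(S_n^\Theta\in(x,x+1])$), combined with the exponential moment $E[e^{\delta\tau_k}]<\infty$ coming from the not-degenerate-at-zero inter-arrival times to force $\sum_{k>N}\sup_t E[\,\cdot\,]\to0$. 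A secondary technical point is verifying that the renewal function $\lambda$ gives a well-defined $\sigma$-finite measure so that the interchange of the finite-sum expansion with the $\int\lambda(\d u)$ integration is legitimate, and that the $o(1)$ in the randomly weighted expansion — which a priori depends on $\Theta$ through its distribution — can be pulled out uniformly because the weights lie in the fixed compact set $(0,1]$; this is where the boundedness of the discount factors is essential and is the reason the theorem is clean. \rule{0.5em}{0.5em}
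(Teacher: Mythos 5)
Your proposal follows essentially the same route as the paper: split over the values of $N(t)$, control the tail $n>N$ via a weighted Kesten-type inequality for $\widetilde{\mathscr S}_2$ combined with the analyticity of the moment generating function of $N(t)$, apply the second order expansion for randomly weighted sums (the paper's Lemma \ref{lem:sum}, exactly the schematic formula you wrote) to the finitely many remaining terms, and convert the resulting sums into integrals against $\lambda(\d u)\lambda(\d v)$ by separating the orderings $i<j$ and $i>j$. The only cosmetic point is that the weights are not merely in $(0,1]$ but, on $\{\tau_k\le t\le T\}$, in the compact interval $[e^{-rT},1]$ bounded away from zero, which is what the lemmas actually require.
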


\begin{remark}
 Clearly, our Theorem \ref{the:claim} is more precise than Theorem 2.1 of \cite{HT2008}. Moreover, by some simple calculations, one can easily verify that the last term $\int_{0-}^tF(xe^{ru},(x+1)e^{ru}]\lambda (\d u)$ in Theorem \ref{the:claim} is also the higher order infinitesimal of the second one $\varphi_{F;\lambda,\lambda}(x;t)$.
 \end{remark}


\begin{remark}
In Theorem \ref{the:claim}, if we further restrict the claim sizes belong to some smaller distribution class, for instance, $F$ has a density belonging to regular variation (see Corollary \ref{cor:claim} below) with some index, then the second order asymptotic formula of the discounted aggregate claims becomes more explicit.
\end{remark}

\begin{definition}
	We say that a measurable function $f$ valued on $[0,\infty)$ is \textit{regularly varying} at infinity with index $\alpha \in \R$ if, for all $t>0$,
	$$
	\lim\limits_{x \rightarrow \infty} \frac{f(tx)}{f(x)} = t^\alpha,
	$$
	which is denoted by $f \in \RV_{\alpha}$. Further, for a distribution function $F$, we say $F \in \RV_{\alpha}$ if $\overline F$ satisfies the above relation replacing $f$. In particular,  if $\alpha = 0$, $F$ is called \textit{slowly varying} (at infinity), which is denoted by $F \in \RV_0$. 
\end{definition}

\begin{corollary}\label{cor:claim}
Consider the discounted aggregate claims described in \eqref{eq:Dr}. Further assume that $\{N(t),t>0\}$ is a Poisson process with intensity $\lambda > 0$. If $F$ has a density $f\in\RV_{-(\alpha+1)}$ with $\alpha>1$, then $F\in\widetilde {{\mathscr S}_2}$ and, for arbitrary fixed $T\in\Lambda$, it holds uniformly for all $t \in \Lambda_T$ that
\begin{align*}
P\(D_r(t)>x\)&=\frac{\lambda(1-e^{-\alpha r t})}{\alpha r}\overline{F}(x)+\frac{\mu_F\lambda^2(1-e^{-rt})(1-e^{-\alpha r t})}{\alpha r^2}f(x)+o\(f(x)\).
\end{align*}
\end{corollary}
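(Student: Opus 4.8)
The plan is to derive Corollary \ref{cor:claim} directly from Theorem \ref{the:claim} in two stages: first, verify that $f\in\RV_{-(\alpha+1)}$ with $\alpha>1$ forces $F\in\widetilde {{\mathscr S}_2}$, so that Theorem \ref{the:claim} is applicable; second, specialise the three terms of Theorem \ref{the:claim} to the homogeneous Poisson case — in which $\lambda(\d u)=\lambda\,\d u$ on $(0,\infty)$ and $\Lambda_T=[0,T]$ for every $T>0$ — and evaluate them in closed form.

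For the first stage, Karamata's theorem gives $\overline F(x)=\int_x^\infty f(y)\,\d y\sim xf(x)/\alpha\in\RV_{-\alpha}$, and since $\alpha>1$ this entails a finite mean $\mu_F$. Similarly $F(x,x+1]=\int_x^{x+1}f(y)\,\d y\sim f(x)\in\RV_{-(\alpha+1)}$, which is almost decreasing by Potter's bounds (a regularly varying function with negative index is bounded on $[x,\infty)$ by a constant multiple of its value at $x$). It then remains to check $F^+\in\mathscr{S}_{2}$, i.e.\ $\overline{F^{2*}}(x)-2\overline F(x)\sim 2\mu_F F(x,x+1]$; this is the known fact that a distribution with a regularly varying density of index below $-1$ is second order subexponential (see \cite{L2012}), and it can alternatively be verified directly by writing $\overline{F^{2*}}(x)=\int_{[0,\infty)}\overline F(x-y)\,F(\d y)$, splitting the range of integration around $x/2$, and applying Karamata-type estimates to the density. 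Hence $F\in\widetilde {{\mathscr S}_2}$.

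For the second stage, the first term of Theorem \ref{the:claim} becomes $\lambda\int_0^t\overline F(xe^{ru})\,\d u$; since $\overline F\in\RV_{-\alpha}$ and $u\mapsto e^{ru}$ remains in the compact set $[1,e^{rT}]$, the uniform convergence theorem gives $\overline F(xe^{ru})\sim e^{-\alpha ru}\overline F(x)$ uniformly in $u\in[0,t]$, so this term is asymptotic to $\lambda\overline F(x)\int_0^t e^{-\alpha ru}\,\d u=\frac{\lambda(1-e^{-\alpha rt})}{\alpha r}\overline F(x)$. The key local estimate is that, for $w$ in a compact set, $F(xe^{rw},(x+1)e^{rw}]=e^{rw}\int_x^{x+1}f(ze^{rw})\,\d z\sim e^{rw}f(xe^{rw})\sim e^{-\alpha rw}f(x)$ uniformly, again by the uniform convergence theorem. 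Substituting this into $\varphi_{F;\lambda,\lambda}(x;t)$ with $\lambda(\d u)=\lambda\,\d u$, $\lambda(\d v)=\lambda\,\d v$ — the (bounded) integrand converging uniformly on the bounded domain $\{0\le v\le t,\ 0\le u\le t-v\}\subseteq[0,T]^2$ — one passes to the limit under the double integral to obtain
\[
\varphi_{F;\lambda,\lambda}(x;t)\sim\lambda^2 f(x)\int_0^t e^{-(\alpha+1)rv}\Big(\int_0^{t-v}\big(e^{-\alpha ru}+e^{-ru}\big)\,\d u\Big)\d v,
\]
and an elementary computation (splitting into the $\alpha r$- and $r$-pieces and collecting terms) evaluates the iterated integral to $\frac{(1-e^{-rt})(1-e^{-\alpha rt})}{\alpha r^2}$, whence $\mu_F\varphi_{F;\lambda,\lambda}(x;t)\sim\frac{\mu_F\lambda^2(1-e^{-rt})(1-e^{-\alpha rt})}{\alpha r^2}f(x)$. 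The same local estimate shows the last, $o(1)$-weighted term of Theorem \ref{the:claim} equals $o(1)\cdot\lambda f(x)\int_0^t e^{-\alpha ru}\,\d u=o(f(x))$. Adding the three contributions, and noting that all the convergences above are uniform for $t\in[0,T]$, yields the stated formula.

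The routine but slightly delicate points are: (i) the $\mathscr{S}_{2}$-membership, which rests on a density-level convolution estimate (or a citation to \cite{L2012}); and (ii) the justification that the regular-variation asymptotics may be inserted inside the single and double integrals uniformly in both the integration variables (over compacts) and in $t\in[0,T]$ — here the uniform convergence theorem for regularly varying functions, combined with the boundedness of the integrands over a domain of finite measure, is precisely what is needed. The subsequent evaluation of the iterated integral, though a little lengthy, is entirely elementary and poses no real obstacle.
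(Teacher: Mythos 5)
Your proposal is correct and follows essentially the same route as the paper: both establish $F\in\widetilde{\mathscr S}_2$ by appeal to the known result for regularly varying densities (the paper cites Corollary 3.2 of Lin (2014), you cite Lin (2012) with a sketch), and both then specialise Theorem \ref{the:claim} via the uniform local estimate $F(xe^{rw},(x+1)e^{rw}]\sim e^{-\alpha rw}f(x)$ and evaluate the resulting elementary iterated integrals, whose value $\frac{(1-e^{-rt})(1-e^{-\alpha rt})}{\alpha r^2}$ you state correctly. No gaps.
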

\begin{proof} Firstly, if $f\in\RV_{-(\alpha+1)}$ with $\alpha>1$, then the proof for $F\in\widetilde {{\mathscr S}_2}$ has been presented in the proof of Corollary 3.2 in \cite{L2014}. Moreover, it follows from \cite{DF2006} that ${F}\in\RV_{-\alpha}$, which implies that
\begin{align*}
\int_{0-}^t\overline F(xe^{ru})\lambda (\d u)&\sim \frac{\lambda (1-e^{-\alpha r t})}{\alpha r}\overline F(x).
\end{align*}
Next, due to $F(x,x+y]\sim yF(x,x+1]$ locally uniformly for $y\in(-\infty,\infty)$ and $F(x,x+1]\sim f(x)$, note that
\begin{align*}
\int_{0-}^tF(xe^{ru},(x+1)e^{ru}]\lambda (\d u)&\sim \int_{0-}^te^{ru}F(xe^{ru},xe^{ru}+1]\lambda (\d u)\nonumber\\
&\sim  \lambda\int_{0-}^te^{ru}f(xe^{ru}) \d u\sim\frac{\lambda(1-e^{-\alpha r t})}{\alpha r}f(x),
\end{align*}
where in the last step we used $f\in\RV_{-(\alpha+1)}$. Similarly, we have
\begin{align}
\varphi_{F;\lambda,\lambda}(x;t)&\sim \lambda^2\int_{{0^ - }}^t \int_{{0^ - }}^{t - v} \Big(e^{ - rv}e^{ r(u+v)} f(x{e^{r(u + v)}})+ e^{ - r(u + v)}e^{rv}f(x{e^{rv}})\Big) \d u\d v\nonumber\\
&\sim\lambda^2f(x)\int_{{0^ - }}^t \int_{{0^ - }}^{t - v} \Big(e^{ - rv} e^{-\alpha r(u + v)}+ {e^{ - r(u + v)}}e^{-\alpha r v}\Big)\d u \d v\nonumber\\
&=\frac{\lambda^2(1-e^{-rt})(1-e^{-\alpha r t})}{\alpha r^2}f(x).\nonumber
\end{align}
Combining all these relations yields the desired result.
\end{proof}

\setcounter{equation}{0}
\section{Second order asymptotic for continuous-time renewal risk models with by-claims}\label{sec:by-claim}

This section mainly focuses on continuous-time renewal risk models with constant interest force and by-claims. Specifically, for each positive integer $k$, assume that an insurer's $k$th main claim $X_k$ occurring at time $\tau_k$ will be accompanied with a by-claim $Y_k$ occurring at $\tau_k+D_k$, where $D_k$ denotes an uncertain delay time (possibly degenerate at $0$). Let $\{X_k;k \geq 1\}$ and $\{Y_k;k \geq 1\}$   be two sequences of i.i.d. real-valued r.v.s with respective common distributions $F$ and $G$, and $\{D_k;k \geq 1\}$ be another sequence of i.i.d. nonnegative r.v.s with common distributions $H$. Assume that the claim-arrival times of the main claims $\{\tau_k;k \geq 1\}$ constitute a renewal sequence such that the inter-arrival times $\theta_1=\tau_1,\theta_k=\tau_k-\tau_{k-1}, k=2, 3, \cdots $ form a sequence of i.i.d. nonnegative r.v.s, which drive the corresponding renewal counting process $N(t)=\sup\{i\in \mathbb{N}:\tau_{i}\leq t\}$ representing the claim-arrival process with finite renewal function $\lambda (t) = EN(t) = \sum\limits_{i = 1}^\infty {P({\tau _i} \le t)}$. Under this setting, the discounted aggregate claims with by-claims are expressed as the stochastic process
\begin{eqnarray}\label{eq:by-claim}
L_r(t)=\sum\limits_{k = 1}^{N(t)} {{X_k}{e^{ - r{\tau _k}}}}  + \sum\limits_{k = 1}^{\infty} {{Y_k}{e^{ - r({\tau _k} + {D_k})}}{{\bf 1}_{\{ {\tau _k} + {D_k} \le t\} }}},~~ t \geq 0,
\end{eqnarray}
where $r \geq 0$ is the constant interest force as before. In addition, we also assume that all the random resources appearing in \eqref{eq:by-claim} are mutually independent.

Now, we are ready to state our next main result. For the convenience of our presentation, we introduce the following notations. Set $(\lambda * H)(t) =\int_{{0^ - }}^tH(t-s)\lambda(\d s)$,
\begin{align*}
\varphi_0(x;t):=\int_{{0^ - }}^t {\overline F(x{e^{ru}})} \lambda (\d u) + \int_{{0^ - }}^t {\overline G(x{e^{ru}})} (\lambda  * H)(\d u),
\end{align*}
\begin{align*}
\widetilde{\varphi}_{G}(x;t)&:=\int_{{0^ - }}^t {\int_{{0^ - }}^{t - v} {{e^{ - rv}}G(x{e^{r(u + v)}},(x + 1){e^{r(u + v)}}](\lambda  * H)(\d u)} } \lambda (\d v)\nonumber\\
&\quad+ \int_{{0^ - }}^t {\int_{{0^ - }}^{t - v} {{e^{ - rv}}G(x{e^{r(v + s)}},(x + 1){e^{r(v + s)}}]H(\d s)} } \lambda (\d v)\nonumber\\
&\quad+ \int_{{0^ - }}^t {\int_{{0^ - }}^{t - v} {\int_{{0^ - }}^{t - v} {{e^{ - r(u + v)}}G(x{e^{r(v + s)}},(x + 1){e^{r(v + s)}}]} H(\d s)} \lambda (\d u)} \lambda (\d v),
\end{align*}
\begin{align*}
\widetilde{\varphi}_{F}(x;t)&:=\int_{{0^ - }}^t {\int_{{0^ - }}^{t - u} {{e^{ - r(u + v)}}F(x{e^{ru}},(x + 1){e^{ru}}](\lambda  * H)(\d v)} } \lambda (\d u)\nonumber\\
&\quad+\int_{{0^ - }}^t {\int_{{0^ - }}^{t - v} {{e^{ - r(v + s)}}F(x{e^{rv}},(x + 1){e^{rv}}]} H(\d s)} \lambda (\d v)\nonumber\\
&\quad+ \int_{{0^ - }}^t {\int_{{0^ - }}^{t - v} {\int_{{0^ - }}^{t - v} {{e^{ - r(v + s)}}F(x{e^{r(u + v)}},(x + 1){e^{r(u + v)}}]H(\d s)} \lambda (\d u)} \lambda (\d v)},
\end{align*}
and
\begin{align*}
&\quad\varphi_{G;\lambda  * H,\lambda}(x;t)\nonumber\\
&:=\int_{{0^ - }}^t { \int_{{0^ - }}^{t - v} \Big({e^{ - rv}}G(x{e^{r(u + v)}},(x + 1){e^{r(u + v)}}]}+{e^{ - r(u + v)}}G(x{e^{ru}},(x + 1){e^{ru}}] \Big)(\lambda  * H)(\d u) \lambda (\d v).
\end{align*}
 \begin{theorem}\label{the:by-claim}
 Consider the discounted aggregate claims described in \eqref{eq:by-claim}. Assume that $F,G\in\widetilde {{\mathscr S}_2}$ and $G(x,x+1]\asymp F(x,x+1]$, then, for arbitrary fixed $T\in\Lambda$, it holds uniformly for all $t \in \Lambda_T$ that
\begin{align}
P\(L_r(t)> x\)&=\varphi_0(x;t)+\mu_F\(\varphi_{F;\lambda,\lambda}(x;t)+\widetilde{\varphi}_{G}(x;t)\)+\mu_G\(\varphi_{G;\lambda  * H,\lambda}(x;t)+\widetilde{\varphi}_{F}(x;t)\)+o\(\Delta(x;t)\)\nonumber
\end{align}
with
\begin{align*}
\Delta(x;t)&=\int_{{0^ - }}^t {\overline F(x{e^{ru}},(x+1){e^{ru}}]} \lambda (\d u) + \int_{{0^ - }}^t {\overline G(x{e^{ru}},(x+1){e^{ru}}]} (\lambda  * H)(\d u).\nonumber
\end{align*}
\end{theorem}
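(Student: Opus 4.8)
The plan is to reduce the by-claim model \eqref{eq:by-claim} to a randomly weighted sum of the form treated in Theorem \ref{the:claim}. First I would rewrite $L_r(t)$ by conditioning on the arrival times $\{\tau_k\}$, the delays $\{D_k\}$ and the counting process $N(t)$. For fixed realizations of these quantities, $L_r(t)$ becomes a finite randomly weighted sum $\sum_{k=1}^{N(t)} X_k e^{-r\tau_k} + \sum_{k: \tau_k + D_k \le t} Y_k e^{-r(\tau_k + D_k)}$ of the independent main claims $X_k$ (with distribution $F \in \widetilde{\mathscr S}_2$) and by-claims $Y_k$ (with distribution $G \in \widetilde{\mathscr S}_2$), with deterministic nonnegative weights bounded by $1$. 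The condition $G(x,x+1] \asymp F(x,x+1]$ together with $F,G \in \widetilde{\mathscr S}_2$ ensures that $F$ and $G$ are ``second order locally weakly equivalent'', which is precisely what lets the two families of summands be handled by a single second order subexponential asymptotic theory. I would invoke the weighted Kesten-type inequality and the randomly weighted sums asymptotic theory for $\widetilde{\mathscr S}_2$ (the lemmas announced for the appendix) to get, conditionally, a two-term expansion whose leading term is $\sum_k \overline F(x e^{r\tau_k}) + \sum_{k:\tau_k+D_k\le t}\overline G(x e^{r(\tau_k+D_k)})$ and whose second order term collects all the ``cross'' local-probability contributions $\mu_F$ (resp. $\mu_G$) times products of two weighted local probabilities of $F$ (resp. $G$), with the remainder uniformly of smaller order than the conditional analogue of $\Delta(x;t)$.

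Next I would take expectations over $\{\tau_k\}$, $\{D_k\}$, $N(t)$. Using $EN(t)=\lambda(t)$ and the independence assumptions, $E\sum_{k=1}^{N(t)} \overline F(x e^{r\tau_k}) = \int_{0^-}^t \overline F(xe^{ru})\lambda(\d u)$, and since a by-claim attached to the $k$th main claim contributes at time $\tau_k + D_k$, the by-claim leading term becomes $\int_{0^-}^t \overline G(xe^{ru})(\lambda*H)(\d u)$ — together these give $\varphi_0(x;t)$. For the second order term, the pairwise contributions split into four groups according to whether both indices reference main claims, both reference by-claims, or one of each; carrying the expectation through the renewal measure and the delay distribution $H$ produces exactly $\mu_F \varphi_{F;\lambda,\lambda}(x;t)$ (main–main), $\mu_G\varphi_{G;\lambda*H,\lambda}(x;t)$ (by–by), and the mixed terms $\mu_F\widetilde\varphi_G(x;t)$ and $\mu_G\widetilde\varphi_F(x;t)$ — the three integral pieces in each of $\widetilde\varphi_F,\widetilde\varphi_G$ corresponding to the delay sitting on the ``first'' summand, the ``second'' summand, or both being further weighted by an independent arrival. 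The local relation $F(xe^{ru},(x+1)e^{ru}] \sim e^{ru}F(xe^{ru}, xe^{ru}+1]$ (from $F\in\mathscr L_\Delta$) and dominated convergence justify interchanging limits and integrals uniformly in $t\in\Lambda_T$, so the $o(\cdot)$ terms aggregate to $o(\Delta(x;t))$.

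The main obstacle I expect is the uniformity of the remainder control over $t\in\Lambda_T$ and the truncation of the infinite sum of by-claims. Since $N(t)$ is unbounded, one must first show that the tail $\sum_{k>N(t)+m}$ (or contributions from $\tau_k + D_k$ with $\tau_k$ large) is negligible relative to $\Delta(x;t)$ uniformly — this is where a weighted Kesten-type bound of the form $\sum_k \overline F(x e^{r\tau_k}) \le C \sum_k e^{-r\tau_k}\overline F(x)$-type estimates, combined with finiteness of $\lambda(T)$ and a moment bound on $N(T)$, are needed to dominate and pass to the limit. A secondary technical point is verifying that $G(x,x+1]\asymp F(x,x+1]$ is genuinely enough (rather than asymptotic equivalence) to merge the $F$- and $G$-driven second order terms without cross-contamination; here one uses that $\asymp$ suffices inside the $o(\Delta)$ bookkeeping because $\Delta(x;t)$ already contains both $\overline F$ and $\overline G$ local pieces. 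Once these two points are settled, assembling the named functionals $\varphi_0,\varphi_{F;\lambda,\lambda},\varphi_{G;\lambda*H,\lambda},\widetilde\varphi_F,\widetilde\varphi_G$ is bookkeeping over the renewal and delay measures.
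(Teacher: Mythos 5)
Your proposal follows essentially the same route as the paper: decompose by the value of $N(t)$, treat the model as a randomly weighted sum of the $2n$ variables $X_k$ and $Y_k$ with weights $e^{-r\tau_k}$ and $e^{-r(\tau_k+D_k)}\mathbf{1}_{\{\tau_k+D_k\le t\}}$, apply the second order expansion for randomly weighted sums (Lemma \ref{lem:sum}) to the finitely many terms with $n\le m$, control the tail $n>m$ via the weighted Kesten-type inequality (Lemma \ref{lem:rkesten}) together with moment/generating-function bounds on $N(T)$, and then integrate the four groups of cross terms against the renewal and delay measures to recover $\varphi_{F;\lambda,\lambda}$, $\varphi_{G;\lambda*H,\lambda}$, $\widetilde\varphi_F$ and $\widetilde\varphi_G$. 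This matches the paper's proof in both decomposition and key lemmas, so no further comparison is needed.
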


\begin{remark}
 Indeed, our Theorem \ref{the:by-claim} is more precise than Theorem 2.1 of \cite{YL2019}, regarded as the first order asymptotic, since their paper showed that
 $$P\(L_r(t)> x\)\sim\varphi_0(x;t),$$
  and it can be easily proved that the two terms behind $\varphi_0(x;t)$ are negligible.
 \end{remark}

\begin{remark}
It seems that the second order expansion for discounted aggregate claims of continuous-time renewal risk models with constant interest force and by-claims is too complex, but it can be easily calculated by computers, which can be seen in our simulation studies.
\end{remark}

\begin{remark}
Similarly done as in Section \ref{sec:claim}, if we further restrict the claim sizes to some smaller distribution class, then the second order asymptotic formula of the discounted aggregate claims becomes transparent.
\end{remark}

\begin{corollary}
Consider the discounted aggregate claims described in \eqref{eq:by-claim}. Assume that $\{N(t),t>0\}$ is a Poisson process with intensity $\lambda > 0$. If $F$ and $G$ have respective  densities $f$ and $g$ both in $\RV_{-(\alpha+1)}$ with $\alpha>1$ satisfying $ f(x)\asymp g(x)$ and $H$ is an exponential distribution with mean $\widehat{\lambda}^{-1}$, but $\widehat{\lambda}\neq r$ and $\widehat{\lambda}\neq\alpha r$, then $F,G\in\widetilde {{\mathscr S}_2}$ and, for arbitrary fixed $T\in\Lambda$, it holds uniformly for all $t \in \Lambda_T$ that
\begin{align*}
P\(L_r(t)> x\)&=\frac{\lambda(1-e^{-\alpha r t})}{\alpha r}\Big(\overline F(x)+\overline G(x)\Big)-\frac{\lambda(1-e^{-(\alpha r+\widehat\lambda) t})}{\alpha r+\hat\lambda}\overline G(x)+\mu_F\Big(\zeta_{r,\alpha}^{\lambda}(t)f(x)\nonumber\\
&\quad+\chi_{r;\alpha}^{\lambda,\widehat{\lambda}}(t)g(x)\Big)+\mu_G\Big(\omega_{r;\alpha}^{\lambda,\widehat{\lambda}}(t)g(x)+\pi_{r;\alpha}^{\lambda,\widehat{\lambda}}(t)f(x)\Big)+o(f(x)),
\end{align*}
with
\begin{align*}
\zeta_{r;\alpha}^{\lambda}(t)=\frac{\lambda^2(1-e^{-rt})(1-e^{-\alpha r t})}{\alpha r^2},
\end{align*}
\begin{align*}
\chi_{r;\alpha}^{\lambda,\widehat{\lambda}}(t)&=\frac{\lambda\widehat{\lambda}\Big((\alpha+1)\lambda+\alpha r\Big)}{\alpha r^2(\alpha r+\widehat{\lambda})(\alpha+1)}+\frac{\lambda r\Big((\alpha+1)r+\alpha \widehat{\lambda}\Big)e^{-(\alpha+1)rt}}{\alpha r^2(\alpha+1)(r-\widehat{\lambda})}+\frac{\lambda(\widehat{\lambda}-2\lambda)e^{-(\alpha r+\widehat{\lambda})t}}{(\alpha r+\widehat{\lambda})(r-\widehat{\lambda})}\nonumber\\
&\quad-\frac{\lambda^2\widehat{\lambda}e^{-rt}}{\alpha r^2(\alpha r+\widehat{\lambda})}-\frac{\lambda^2e^{-\alpha r t}}{\alpha r^2},
\end{align*}
\begin{align*}
\omega_{r;\alpha}^{\lambda,\widehat{\lambda}}(t)&=\frac{\lambda^2(1-e^{-rt})(1-e^{-\alpha rt})}{\alpha r^2}+\frac{\lambda^2e^{-(\alpha r+\widehat{\lambda})}(1-e^{-(r-\widehat{\lambda})t})}{(r-\widehat{\lambda})(\alpha r+\widehat{\lambda})}+\frac{\lambda^2e^{-rt}(1-e^{{-(\alpha r+\widehat{\lambda})t}})}{(\alpha r+\widehat{\lambda})(r+\alpha r+\widehat{\lambda})}\nonumber\\
&\quad-\frac{\lambda^2(1-e^{-(\alpha+1)rt})}{(\alpha+1)r(\alpha r+\widehat{\lambda})}-\frac{\lambda^2(1-e^{-rt})}{r(r+\alpha r+\widehat{\lambda})},
\end{align*}
and
\begin{align*}
\pi_{r;\alpha}^{\lambda,\widehat{\lambda}}(t)&=\frac{\lambda\widehat{\lambda}(\lambda+\alpha\lambda+\alpha r)}{\alpha r^2(\widehat{\lambda}+r)}+\frac{\lambda^2e^{-rt}}{\alpha r^2}-\frac{\lambda^2\widehat{\lambda}e^{-\alpha rt}}{\alpha r^2(\widehat{\lambda}+r)}+\frac{\lambda\Big(\lambda(\widehat{\lambda}-\alpha r)+\alpha \widehat{\lambda}r\Big){e^{-(\widehat{\lambda}+r)t}}}{\alpha r(\widehat{\lambda}+r)(\widehat{\lambda}-\alpha r)}\nonumber\\
&\quad+\frac{\lambda\widehat{\lambda}\Big((\alpha+1)(\lambda\widehat{\lambda}-\alpha r)+\lambda r(\widehat{\lambda}-\alpha r)\Big)e^{-(\alpha+1)rt}}{\alpha(\alpha+1)r^2(\widehat{\lambda}+r)(\widehat{\lambda}-\alpha r)}-\frac{\lambda^2e^{-(\widehat{\lambda}+\alpha r+r)t}}{\alpha r(\widehat{\lambda}+r)}.
\end{align*}
\end{corollary}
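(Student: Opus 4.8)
The plan is to specialise Theorem~\ref{the:by-claim} to the present Poisson/exponential/regularly-varying setting and then to evaluate the six resulting integral ingredients in closed form.

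First I would record the structural facts. That $F,G\in\widetilde {{\mathscr S}_2}$ is immediate from Corollary~3.2 of \cite{L2014} applied to each density separately, exactly as in the proof of Corollary~\ref{cor:claim}. By \cite{DF2006}, $\overline F,\overline G\in\RV_{-\alpha}$, so $\overline F(xe^{ru})\sim e^{-\alpha ru}\overline F(x)$ and similarly for $G$; likewise, combining $F(x,x+y]\sim yF(x,x+1]$ locally uniformly in $y$ with $F(x,x+1]\sim f(x)$ and $f\in\RV_{-(\alpha+1)}$ gives $F(xe^{r\sigma},(x+1)e^{r\sigma}]\sim e^{r\sigma}f(xe^{r\sigma})\sim e^{-\alpha r\sigma}f(x)$, and the analogue for $G$ with $g$. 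Since $f\asymp g$, all local tails appearing in $\Delta(x;t)$ are comparable, so $\Delta(x;t)\asymp f(x)$ and the error term of Theorem~\ref{the:by-claim} becomes $o(f(x))$. For the driving measures, $N$ Poisson gives $\lambda(\d u)=\lambda\,\d u$, while $H$ exponential with rate $\widehat\lambda$ gives $H(\d s)=\widehat\lambda e^{-\widehat\lambda s}\,\d s$ and, by a one-line convolution, $(\lambda*H)(\d u)=\lambda(1-e^{-\widehat\lambda u})\,\d u$.

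Next I would substitute these into Theorem~\ref{the:by-claim}. In $\varphi_0(x;t)$ the two integrals reduce to $\lambda\int_0^t e^{-\alpha ru}\,\d u$ and $\lambda\int_0^t e^{-\alpha ru}(1-e^{-\widehat\lambda u})\,\d u$, giving exactly the first-order part displayed in the statement. For the second-order ingredients $\varphi_{F;\lambda,\lambda}$, $\widetilde\varphi_G$, $\varphi_{G;\lambda*H,\lambda}$ and $\widetilde\varphi_F$, I would replace each local increment $F(xe^{r\sigma},(x+1)e^{r\sigma}]$ (with $\sigma$ the relevant sum of integration variables) by $e^{-\alpha r\sigma}f(x)$, and each such increment of $G$ by $e^{-\alpha r\sigma}g(x)$, then pull the factor $f(x)$ or $g(x)$ out of the integral --- justified by dominated convergence, since the integration region is a bounded simplex such as $\{0\le v\le t,\, 0\le u\le t-v\}$ on which the normalised integrands converge uniformly and boundedly --- and evaluate the remaining deterministic double and triple integrals of products of the exponentials $e^{-r\sigma}$, $e^{-\widehat\lambda\sigma}$ and $e^{-\alpha r\sigma}$. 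Each of these is a finite linear combination of terms $c\,e^{-\gamma t}$ whose coefficients carry denominators built from $\alpha r+\widehat\lambda$, $r-\widehat\lambda$, $\widehat\lambda-\alpha r$, $(\alpha+1)r$ and their sums --- this is precisely why the hypotheses $\widehat\lambda\ne r$ and $\widehat\lambda\ne\alpha r$ are imposed, to avoid resonant coincidences of exponents that would produce a spurious factor $t$. Collecting the $f(x)$-coefficient from $\mu_F\varphi_{F;\lambda,\lambda}$ and $\mu_G\widetilde\varphi_F$ yields $\mu_F\zeta_{r;\alpha}^{\lambda}(t)+\mu_G\pi_{r;\alpha}^{\lambda,\widehat\lambda}(t)$, collecting the $g(x)$-coefficient from $\mu_F\widetilde\varphi_G$ and $\mu_G\varphi_{G;\lambda*H,\lambda}$ yields $\mu_F\chi_{r;\alpha}^{\lambda,\widehat\lambda}(t)+\mu_G\omega_{r;\alpha}^{\lambda,\widehat\lambda}(t)$, and adding back $\varphi_0(x;t)$ gives the asserted identity. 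Uniformity over $t\in\Lambda_T$ carries over directly from Theorem~\ref{the:by-claim} (the $\varphi$'s being continuous in $t$ on the compact $\Lambda_T$), so nothing further is needed there.

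The main obstacle is purely computational: the three triple integrals inside $\widetilde\varphi_F$ and $\widetilde\varphi_G$, with their mixed kernels $(\lambda*H)(\d u)$, $H(\d s)$ and $\lambda(\d v)$ and with the upper bound $t-v$ shared by both $u$ and $s$, are long to integrate, and regrouping the many resulting exponential terms into the four displayed closed forms $\zeta$, $\chi$, $\omega$, $\pi$ requires careful bookkeeping; no conceptual difficulty arises beyond this, the only genuine subtlety being to verify that the hypotheses $\widehat\lambda\ne r$ and $\widehat\lambda\ne\alpha r$ indeed rule out every division by zero that appears along the way.
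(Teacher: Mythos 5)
Your proposal follows essentially the same route as the paper: it specialises Theorem \ref{the:by-claim} using $F,G\in\widetilde{\mathscr S}_2$ (via Corollary 3.2 of \cite{L2014}), $\lambda(\d u)=\lambda\,\d u$, $(\lambda*H)(\d u)=\lambda(1-e^{-\widehat\lambda u})\,\d u$ and the regular-variation replacements of the local increments by $e^{-\alpha r\sigma}f(x)$ or $e^{-\alpha r\sigma}g(x)$, and then evaluates the resulting deterministic integrals, pairing $\varphi_{F;\lambda,\lambda}\mapsto\zeta$, $\widetilde\varphi_G\mapsto\chi$, $\varphi_{G;\lambda*H,\lambda}\mapsto\omega$, $\widetilde\varphi_F\mapsto\pi$ exactly as the paper does. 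The remaining work you flag is indeed only the explicit exponential integrations, which the paper likewise states without intermediate steps.
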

\begin{proof}
Adopting the same proof ideas used in Corollary \ref{cor:claim}, it suffices to calculate all the integrals appearing in right-hand of the relation in Theorem \ref{eq:by-claim}.
First, the asymptotic results of $\int_{{0^ - }}^t {\overline F(x{e^{ru}})} \lambda (\d u)$ and $\varphi_{F;\lambda,\lambda}(x;t)$ have been obtained in Corollary \ref{cor:claim}. Next, by some computations, it follows that
\begin{align*}
\int_{{0^ - }}^t {\overline G(x{e^{ru}})} (\lambda  * H)(\d u)&=\lambda\int_{{0^ - }}^t {\overline G(x{e^{ru}})} (1-e^{-\widehat\lambda t})\d u\nonumber\\
&\sim\lambda\overline G(x)\int_{{0^ - }}^t e^{-\alpha rt} (1-e^{-\widehat\lambda t})\d u\nonumber\\
&=\lambda\Big(\frac{1-e^{-\alpha r t}}{\alpha r}-\frac{1-e^{-(\alpha r+\widehat\lambda) t}}{\alpha r+\widehat\lambda}\Big)\overline G(x).\nonumber
\end{align*}
Similarly done as above, we have
\begin{align*}
\widetilde{\varphi}_{G}(x;t)&\sim\lambda^2 g(x)\int_{{0^ - }}^t {\int_{{0^ - }}^{t - v}} e^{ - rv}e^{ - \alpha r(u+v)}\big(1-e^{-\widehat\lambda u}\big)\d u\d v+ \lambda\widehat\lambda g(x)\int_{{0^ - }}^t {\int_{{0^ - }}^{t - v}} e^{ - rv}e^{ - \alpha r(u+v)}e^{-\widehat\lambda s}\d s\d v\nonumber\\
&\quad+ \lambda^2\widehat\lambda g(x) \int_{{0^ - }}^t \int_{{0^ - }}^{t - v} \int_{{0^ - }}^{t - v} e^{ - r(u + v)}e^{ - \alpha r(v+s)}e^{-\widehat\lambda s}\d s\d u\d v\nonumber\\
&=\chi_{r;\alpha}^{\lambda,\widehat{\lambda}}(t)g(x),
\end{align*}
\begin{align*}
\varphi_{G;\lambda * H,\lambda}(x;t)&\sim\lambda^2g(x)\int_{{0^ - }}^t \int_{{0^ - }}^{t - v}\Big(e^{ - rv}e^{ - \alpha r(u+v)}+e^{ - r(u+v)}e^{-\alpha r u}\Big)\Big(1-e^{-\widehat\lambda u}\Big) \d u \d v\nonumber\\
&=\omega_{r;\alpha}^{\lambda,\widehat{\lambda}}(t)g(x),
\end{align*}
and
\begin{align*}
\widetilde{\varphi}_{F}(x;t)&\sim\lambda^2 f(x)\int_{{0^ - }}^t {\int_{{0^ - }}^{t - u}} e^{ - r(u+v)}e^{ - \alpha ru}\big(1-e^{-\widehat\lambda v}\big)\d v\d u+ \lambda\widehat\lambda f(x)\int_{{0^ - }}^t {\int_{{0^ - }}^{t - v}} e^{ - r(v+s)}e^{ - \alpha rv}e^{-\widehat\lambda s}\d s\d v\nonumber\\
&\quad+ \lambda^2\widehat\lambda f(x) \int_{{0^ - }}^t \int_{{0^ - }}^{t - v} \int_{{0^ - }}^{t - v} e^{ - r(v + s)}e^{ - \alpha r(u+v)}e^{-\widehat\lambda s}\d s\d u\d v\nonumber\\
&=\pi_{r;\alpha}^{\lambda,\widehat{\lambda}}(t)f(x).
\end{align*}
Thus, this completes the proof.
\end{proof}

 \setcounter{equation}{0}\par
\section{Simulation}\label{sec:simulation}

~~ In this section, some simulation studies are performed to check the accuracy of the obtained theoretical result in Theorem \ref{the:by-claim}. The improvement of our result is illustrated via the crude Monte-Carlo method compared with that on the first order asymptotics. Model specifications for the numerical studies are listed below:

$\bullet$ The main claims $\{X_k; k \geq 1\}$ and the by-claims $\{Y_k; k \geq 1\}$ constitute two sequences of i.i.d. nonnegative r.v.s with the Pareto distribution of the form
\begin{eqnarray}
F(x) = 1 - {\left( {\frac{{{\kappa _F}}}{{x + {\kappa _F}}}} \right)^{{\alpha _F}}}, ~~G(x) = 1 - {\left( {\frac{{{\kappa _G}}}{{x + {\kappa _G}}}} \right)^{{\alpha _G}}}, ~~x \geq 0,\nonumber
\end{eqnarray}
for some $\alpha_F> 2$ and $\alpha_G> 2$; or with the Weibull distribution of the form
\begin{eqnarray}
F(x) = 1 - {e^{ -(x/{\kappa_F})^{\alpha_F}}}, ~~G(x) = 1 - {e^{ -(x/{\kappa_G})^{\alpha_G}}}, ~~x \geq 0,\nonumber
\end{eqnarray}
for some $0 <\alpha_F< 1$ and $0 <\alpha_G< 1$.

$\bullet$ The inter-arrival times of the main claims $\{\theta_k; k \geq 1\}$ is a sequence of i.i.d. nonnegative r.v.s with a common exponential distribution of the form
\begin{eqnarray}
P(\theta_k \leq x) = 1 - {e^{ - \lambda x}},~~x \geq 0,\nonumber
\end{eqnarray}
for some $\lambda > 0$. Then the counting process $\{N(t);t \geq 0\}$  form a integer-valued stochastic process with a Poisson distribution of the form
\begin{eqnarray}
P(N(t)=k) = \frac{{{{(\lambda t)}^k}}}{{k!}}{e^{ - \lambda t}},\quad k=0,1,\ldots.\nonumber
\end{eqnarray}

$\bullet$ The delayed times of claims $\{D_k; k \geq 1\}$ is a sequence of i.i.d. nonnegative r.v.s with a common exponential distribution of the form
\begin{eqnarray}
H(x) = 1 - {e^{ - \widehat\lambda x}},~~x \geq 0,\nonumber
\end{eqnarray}
for some $\widehat\lambda > 0$.

For the simulated value of tail probability for discounted aggregate claims, we first generate five random sequences including arrival times of the main claims $\{\tau_k; k \geq 1\}$, the delayed times $\{D_k; k \geq 1\}$, the main claims $\{X_k; k \geq 1\}$, the delayed claim $\{Y_k; k \geq 1\}$ and the counting process $\{N(t);t \geq 0\}$, with a sample size $n$. For each sample $i = 1, ..., n$, we denote the above five sequences by $\{\tau_k^{(i)}; k \geq 1\}$, $\{D_k^{(i)};k \geq 1\}$, $\{X_k^{(i)}; k \geq 1\}$, $\{Y_k^{(i)}; k \geq 1\}$ and $\{N^{(i)}(t);t \geq 0\}$, respectively, and then calculate the following quantity
\begin{eqnarray}
S^{(i)} = \sum\limits_{k = 1}^{N^{(i)}(t)} \Big({X_k^{(i)}}{e^{ - r{\tau _k^{(i)}}}}+{Y_k^{(i)}}{e^{ - r({\tau _k^{(i)}} + {D_k^{(i)}})}}{\mathbf{1}_{\{ {\tau _k^{(i)}} + {D_k^{(i)}} \le t\} }}\Big),\nonumber
\end{eqnarray}
which represents the estimated value for the net loss caused by the $i$th simulated series of accidents. Repeating the algorithm above $n$ times, the value of cumulative claim tail probability can be estimated by
\begin{eqnarray}
\frac{1}{n}\sum\limits_{i = 1}^n {{\mathbf{1}_{\{ {S^{(i)}} > x\} }}}.\nonumber
\end{eqnarray}

Set the sample size $i = 10^6$ in the Pareto case or the Weibull case. The various parameters are set to $\kappa _F = \kappa _G = 2$, $\alpha _F = \alpha _G = 2.3$, $\lambda = 0.2 $, $\widehat\lambda = 0.2 $, and $r = 0.1$ for the Pareto-distributed claims or $\kappa_F = \kappa_G = 1$, $\alpha _F = \alpha _G = 0.3$, $\lambda = 0.1 $, $\widehat\lambda = 0.1 $, and $r = 0.1$ for the Weibull-distributed claims.

We consider both the first and the second order asymptotic values of tail probability for discounted aggregate claims, where, clearly, the first term of second order expansion is the first order asymptotic value, i.e. $\varphi_0(x;t)$. We plot graphics of the simulated values, the first-order asymptotic estimates and the second-order asymptotic estimates for the two cases, respectively, as shown in Figures (a) and (b) below. Figure (a) shows the Pareto case and Figure (b) shows the the Weibull case.

\begin{figure*}[htbp]
\centering
\subfigure[Pareto case]{
\includegraphics[width=2.8in,height=2.3in]{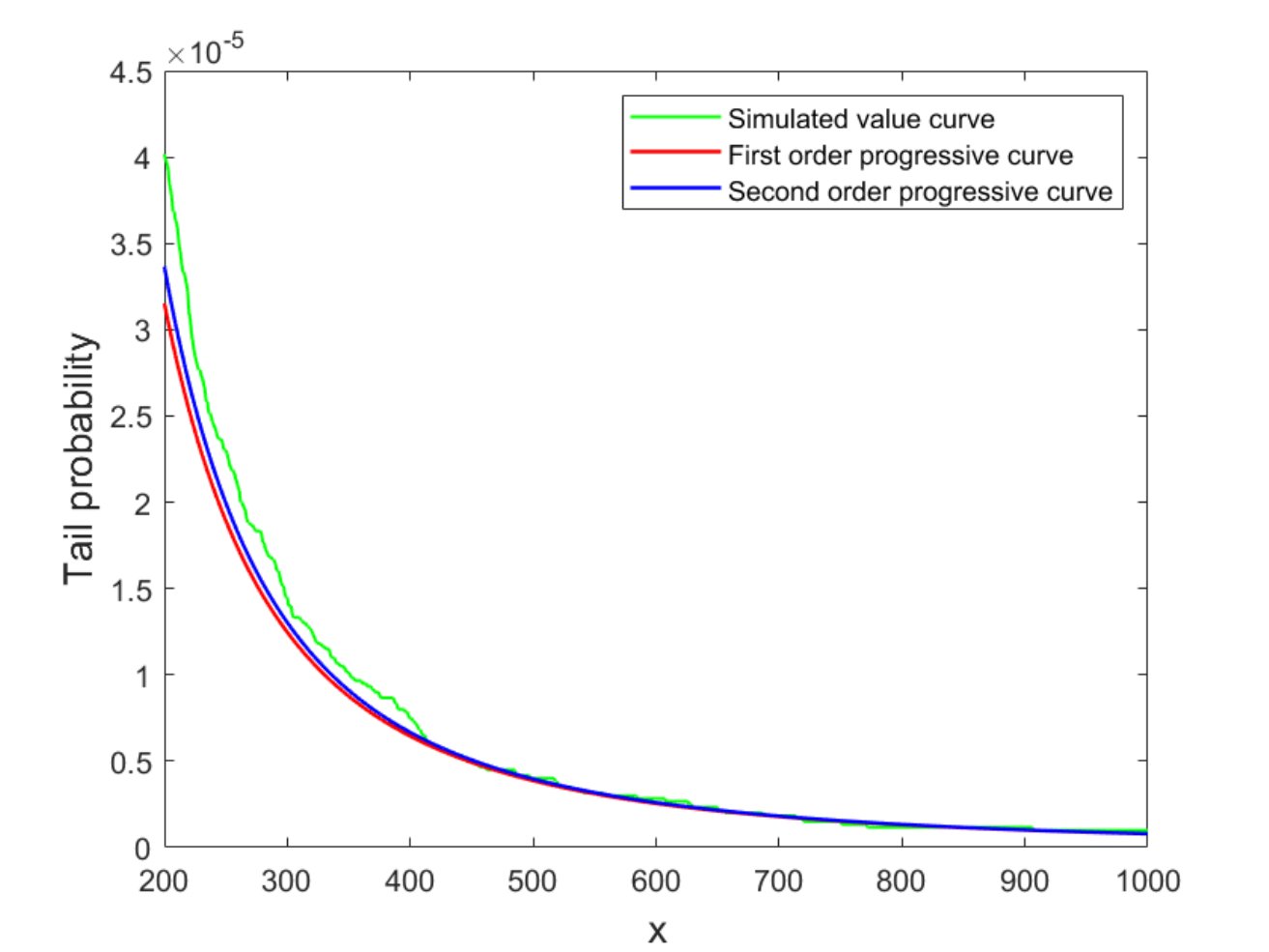}}
\hspace*{0.50cm}
\subfigure[Weibull case]{
\includegraphics[width=2.8in,height=2.3in]{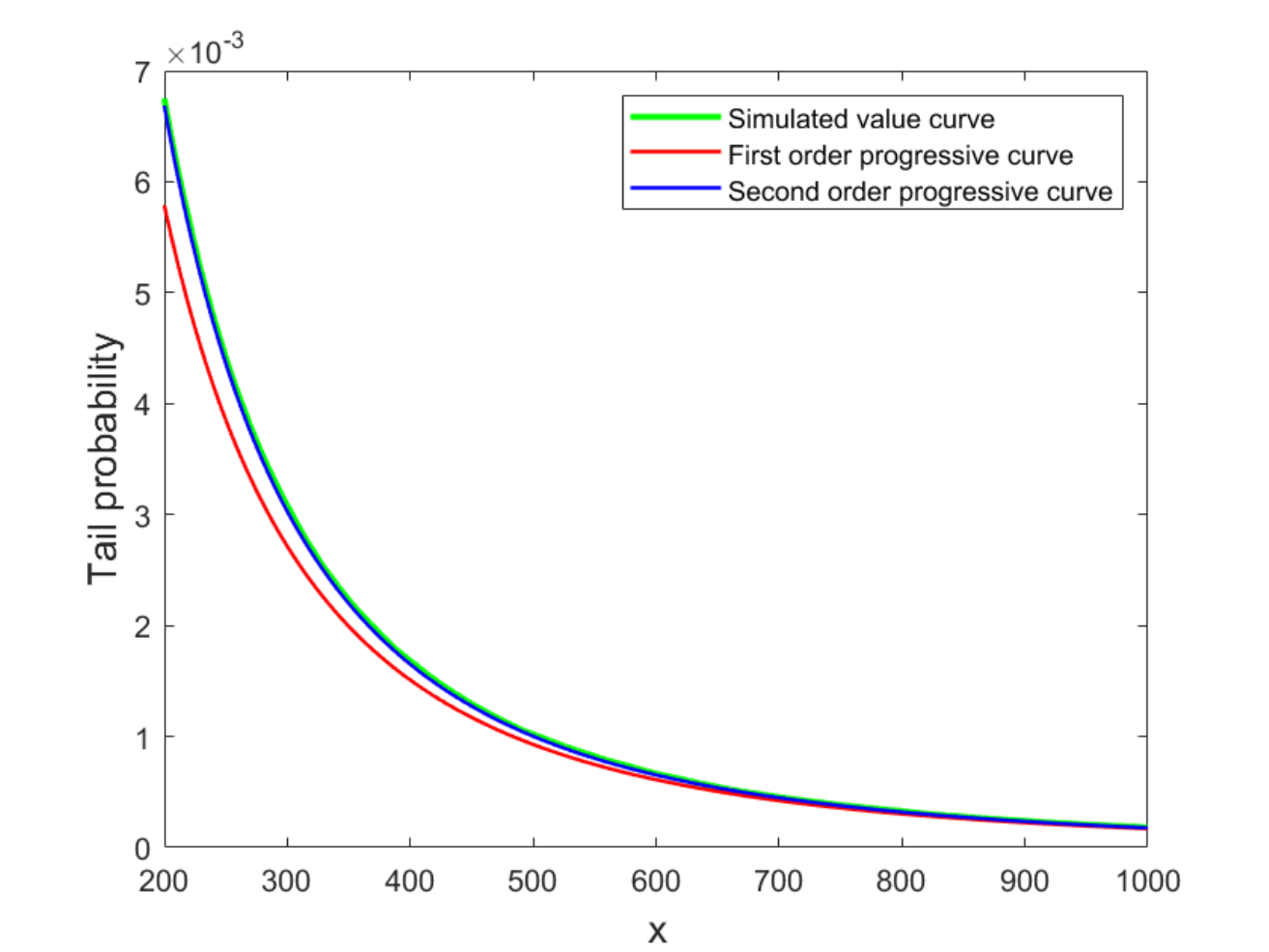}}
\label{fig:2}
\end{figure*}
As shown in Figures (a) and (b), with the increase of $x$ the simulated and the two asymptotic values are closer and decrease gradually. Moreover, the deviations originating from the second order estimate are slightly smaller than the corresponding ones from the first order estimate. This indicates that our second order asymptotic estimate for the cumulative claim tail probability is superior to the traditional first order asymptotic estimate.

\section{Conclusion}\label{sec:conclusion}

The findings from this study mainly make two contributions to the current literature. First, this paper considers continuous-time renewal risk models with constant interest force and without by-claims, in which the claim sizes are assumed to be real-valued second order subexponential r.v.s. By constructing a kesten-type inequality of randomly weighted sums of non-identically distributed second order subexponential r.v.s, the second order asymptotic of discounted aggregate claims is derived. Further, the second order asymptotic of discounted aggregate claims for continuous-time renewal risk models with constant interest force and by-claims is also built. In comparison of the first order asymptotic formulae, our results are more superior and precise, which are demonstrated by some simple numerical studies.

\setcounter{equation}{0}\par
\section{Appendix}\label{sec:appendix}

In this section, we first prepare two lemmas, then the proof of Theorem \ref{the:claim} is given afterwards. Firstly, for simplicity, hereafter, set $\Lambda _{n}^k=\{1,2,\cdots,n\}\setminus\{k\}$ for any $k\in \{1,2,\cdots,n\}$. Lemma \ref{lem:rkesten} is the so-called weighted Kesten-type inequality for second order subexponential distributions, which is of interest in its own right. Lemma \ref{lem:sum} is a non-identically distributed version of Theorem 3.1 of \cite{Lin2020}. Next, we present Lemmas \ref{lem:sum1} and \ref{lem:sum2} for proving Theorem \ref{the:by-claim} and the proof of Theorem \ref{the:by-claim} is given at the end of this paper.

\vspace{3mm}
\begin{lemma}\label{lem:rkesten}
Let $\{X_k,k\geq 1\}$ be a sequence of independent real-valued r.v.s with respective distributions $\{F_k,k\geq 1\}$, and $\{\Theta_k,k\geq 1\}$ be another positive and arbitrarily dependent r.v.s, independent of $\{X_k,k\geq 1\}$. For all $k\geq 1$, denote by $\mu_{F_k}$ the finite expectation of $X_k$. Assume that $F_k\in\mathscr{L}_\Delta $ and $ F_k(x,x+1]\asymp F(x,x+1] $ for some ${F} \in {\widetilde{\mathscr S}_2} $. If, for all $k\geq 1$, $P(\Theta_k \in [a,b])=1$ for some $ 0 < a < b < \infty $, then there exist some constants $A$ and $K=K(\varepsilon,\mu_F,a,b)>0$, irrespective of $n$, such that
 \begin{align*}
    \sup_{x>A}\left|\frac{P\left(\sum\limits_{i=1}^n\Theta_iX_i>x\right)-\sum\limits_{i=1}^nP(\Theta_iX_i>x)}{\sum\limits_{i=1}^nP(\Theta_iX_i\in(x,x+1])}\right|\leq K(1+\varepsilon)^n
 \end{align*}
 holds for all $n\geq1$.
 \end{lemma}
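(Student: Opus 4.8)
The plan is to condition away the random weights and then run an induction on $n$ for a suitably normalized error functional, in the spirit of Kesten's classical lemma for $\mathscr{S}$ but refined so as to track the first-order-subtracted difference on the $(x,x+1]$-local scale. Since $\{\Theta_k,k\geq1\}$ is independent of $\{X_k,k\geq1\}$ and concentrated on $[a,b]$, conditioning on $(\Theta_1,\dots,\Theta_n)$ reduces the assertion to its deterministic-weight form: it suffices to produce $A$ and $K=K(\varepsilon,\mu_F,a,b)$, independent of $n$, such that for every $n$, every $(\theta_1,\dots,\theta_n)\in[a,b]^n$ and every family $\{F_i\}$ satisfying the hypotheses,
\begin{align*}
\Big|P\big(\textstyle\sum_{i=1}^n\theta_iX_i>x\big)-\sum_{i=1}^n\overline{F_i^{\theta_i}}(x)\Big|\ \leq\ K(1+\varepsilon)^n\sum_{i=1}^n F_i^{\theta_i}(x,x+1]\qquad(x>A),
\end{align*}
where $F_i^{\theta_i}$ denotes the law of $\theta_iX_i$; integrating this over the joint law of $(\Theta_1,\dots,\Theta_n)$ returns the stated inequality. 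First I would verify that the rescaled laws are uniformly well behaved: since $[a,b]$ is bounded, $F\in\mathscr{L}_{\Delta}$ has almost-decreasing local probabilities, and $F_i(x,x+1]\asymp F(x,x+1]$ with constants uniform in $i$, one gets $F_i^{\theta_i}\in\mathscr{L}_{\Delta}$ together with $F_i^{\theta_i}(x,x+1]\asymp F(x,x+1]$ and $\overline{F_i^{\theta_i}}(x)\asymp\overline{F}(x)$, with all constants depending only on $a,b$ and the comparability constants of the $F_i$. Write $Z_i=\theta_iX_i$, $G_i=F_i^{\theta_i}$, $S_n=Z_1+\dots+Z_n$, $\mu_{G_i}=EZ_i$.

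\textbf{Induction on $n$.} Set
\begin{align*}
\gamma_n:=\sup\ \sup_{x>A}\ \frac{\big|P(S_n>x)-\sum_{i=1}^n\overline{G_i}(x)\big|}{\sum_{i=1}^n G_i(x,x+1]},
\end{align*}
the first supremum ranging over all admissible $\{F_i\}$ and $(\theta_1,\dots,\theta_n)\in[a,b]^n$. Then $\gamma_1=0$, and the goal is a recursion $\gamma_n\leq(1+\varepsilon)\gamma_{n-1}+C$ with $C=C(\varepsilon,\mu_F,a,b)$, which iterates to $\gamma_n\leq K(1+\varepsilon)^n$ with $K=C/\varepsilon$. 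For the inductive step I would write $S_n=S_{n-1}+Z_n$ with independent summands and split $P(S_n>x)=\int P(S_{n-1}>x-z)\,G_n(\d z)$ into three ranges governed by a slowly growing insensitivity function $h$. On $\{z\leq h(x)\}$: the $\mathscr{L}_{\Delta}$-insensitivity of the finite sum $S_{n-1}$ (available because each $G_i$ is locally comparable to $F$ and finite convolutions of such laws stay $h$-insensitive, cf. \cite{AFK2003}) lets one replace $x-z$ by $x$ up to an $o(1)\sum_{j<n}G_j(x,x+1]$ error, and the linear term $\int_{z\leq h(x)}\big(P(S_{n-1}>x-z)-P(S_{n-1}>x)\big)G_n(\d z)$ produces the local second-order contribution of order $\mu_{G_n}\sum_{j<n}G_j(x,x+1]$; on $\{h(x)<z\leq x-h(x)\}$: one bounds $P(S_{n-1}>x-z)$ by the inductive estimate plus a single-big-jump subexponential bound; on $\{z>x-h(x)\}$: the big jump in $Z_n$ gives $\overline{G_n}(x)$ plus the symmetric correction of order $\mu_{S_{n-1}}\,G_n(x,x+1]$ from the residual mass of $S_{n-1}$ near $0$. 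Running the same decomposition with $Z_n$ and $S_{n-1}$ exchanged, and using that $\sum_{j\leq n}G_j(x,x+1]$ contains $n$ summands each $\asymp F(x,x+1]$, so the $n$-growth of $\mu_{S_{n-1}}=\sum_{j<n}\mu_{G_j}$ is absorbed into the denominator (keeping $C$ constant), and that the middle range costs a factor $1+\varepsilon$ once $x>A(\varepsilon)$, assembles the claimed recursion. Lemma \ref{lem:sum} (the non-identically-distributed analogue of Theorem 3.1 of \cite{Lin2020}) provides a consistency check on the fixed-$n$ second-order local terms arising in the first and third ranges.

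\textbf{Main obstacle.} The crux is the uniform-in-$n$ control of the middle range: one must show $\int_{h(x)<z\leq x-h(x)}P(S_{n-1}>x-z)\,G_n(\d z)\leq\varepsilon\,\gamma_{n-1}\sum_{j<n}G_j(x,x+1]+C\sum_{j\leq n}G_j(x,x+1]$ with $C$ \emph{independent of $n$}, which is a quantitative and uniform form of subexponentiality. This forces $h$ to grow slowly enough for the insensitivity of $S_{n-1}$ yet fast enough to annihilate the boundary terms, and — because the $X_i$ are real-valued — it requires the almost-decreasing property of the local probabilities of $F$ to dominate the whole-line contribution where $S_{n-1}$ may be as negative as $-x$ and the naive splitting $\overline{G^{(n-1)*}}=(n-1)\overline{G}+(\text{negligible})$ breaks down. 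Finally, all comparability and mean constants must be kept tied to $a,b,\mu_F$ and the single reference $F$ rather than accumulating with $n$; this bookkeeping, together with the genuinely whole-line (rather than half-line) nature of the problem, is the technical heart of the proof.
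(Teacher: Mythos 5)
Your overall skeleton --- condition on $(\Theta_1,\dots,\Theta_n)$ to reduce to deterministic weights in $[a,b]^n$, define a normalized error functional, and induct on $n$ toward a recursion $\gamma_n\leq(1+\varepsilon)\gamma_{n-1}+C$ --- is exactly the paper's (the paper's recursion is $\alpha_n\leq(1+\varepsilon/4)\alpha_{n-1}+nb\mu_F$; its linearly growing additive term still yields $\alpha_n\leq b\mu_F n^2(1+\varepsilon/4)^n\leq K(1+\varepsilon)^n$, so you do not even need your additive constant to be $n$-free). The gap lies in the inductive step. You split the convolution by an insensitivity function $h(x)\to\infty$ and then explicitly defer the two hardest points: (i) the middle range $h(x)<z\leq x-h(x)$, where you state the bound you need but give no argument for it; and (ii) the linearization on $\{z\leq h(x)\}$, which requires control of the local probability $P(S_{n-1}\in(x-z,x])$ uniformly in $n$ and in the weights --- but your functional $\gamma_n$ tracks only the tail $P(S_{n-1}>x)$, not the local quantity, so the induction as set up does not return the estimate you invoke there. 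For real-valued summands and an error measured on the $(x,x+1]$ scale these are precisely the places where $F\in\widetilde{{\mathscr S}_2}$ and the almost-decreasing local probabilities must be used, and without a concrete mechanism the argument does not close.

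The paper sidesteps both problems with a different decomposition: it splits on the size of the last summand with a \emph{fixed} threshold $A$ (events $\{c_{n+1}X_{n+1}\leq x-A\}$, $\{c_{n+1}X_{n+1}>x\}$, $\{c_{n+1}X_{n+1}\in(x-A,x]\}$, see \eqref{eq:sum}), applies the inductive hypothesis to $P\big(\sum_{i\leq n}c_iX_i>x-t\big)-\sum_{i\leq n}P(c_iX_i>x-t)$ inside the first integral, and then reduces everything that remains to \emph{pairwise} two-variable inequalities for $c_iX_i+c_{n+1}X_{n+1}$ --- the tail bound \eqref{eq:k3} and its local version \eqref{eq:k4}, imported from Lemmas 4.11--4.12 of Lin (2020), together with the truncation estimate \eqref{eq:k5}. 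Those pairwise inequalities are exactly where second-order subexponentiality enters, with constants depending only on $\varepsilon,\mu_F,a,b$, so no middle-range analysis of the full convolution $S_{n-1}$ is ever needed. To complete your route you would have to prove such pairwise (tail and local) estimates yourself, or restructure the step as the paper does; as written, you have correctly identified the technical heart but left it unproved.
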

\noindent\textbf{Proof.} To prove this lemma, let $(c_1,\ldots,c_n)\in [a,b]^n$ and for all $n\geq 1$, we first prove that there exist some constants $A$ and $K=K(\varepsilon,\mu_F,a,b)>0$, irrespective of $n$, such that
\begin{align}\label{eq:kesten}
\sup_{(c_1,\ldots,c_n)\in [a,b]^n}\sup_{x>A}\left|\frac{P\left(\sum\limits_{i=1}^nc_iX_i>x\right)-\sum\limits_{i=1}^nP(c_iX_i>x)}{\sum\limits_{i=1}^nP(c_iX_i\in(x,x+1])}\right|\leq K(1+\varepsilon)^n.
\end{align}
Without loss of generality, we can assume that $\varepsilon\in (0,1)$ and $c_{1}=\max\{c_1,\ldots,c_{n+1}\}$. Since $F_k(x,x+1]$ is almost decreasing, thus, by Lemma 4.6 of \cite{Lin2020}, that there exist some constants $A_0, C>0$  such that for all $x>A_0$,
\begin{align}\label{eq:k2}
P(c_{n+1}X_{n+1}\in(x,x+1])&\leq C\frac{ c_1}{c_{n+1}}P(c_1X_{n+1}\in(x,x+1])\nonumber\\
&\asymp C\frac{ c_1}{c_{n+1}}F\left(\frac{x}{c_1},\frac{x+1}{c_1}\right]\nonumber\\
&\asymp C\frac{ c_1}{c_{n+1}}P(c_1X_1\in(x,x+1])\nonumber\\
&:=\tilde{K}P(c_1X_1\in(x,x+1]),
\end{align}
where in the second and third steps we used $ F_k(x,x+1]\asymp F(x,x+1], k\geq 1$. Moreover, for any $i,j\in\{1,\ldots,n\},i\neq j$ and $(c_i,c_j)\in [a,b]^2$, it follows from Lemmas 4.11 and 4.12 of \cite{Lin2020} that there exists sufficiently large constant $A>A_0$ such that for all $x>A$,
\begin{align}\label{eq:k3}
P\(c_iX_i+c_jX_j>x,c_jX_j\leq x-A\)&\leq P(c_iX_i>x)+\frac{\varepsilon}{4(2\tilde{K}+1)}P(c_jX_j\in(x,x+1])\nonumber\\
&\quad+\(b+\frac{\varepsilon}{4(2\tilde{K}+1)}\)\mu_FP(c_iX_i\in(x,x+1]),
\end{align}
\begin{align}\label{eq:k4}
&\quad P\(c_iX_i+c_jX_j\in(x,x+1],c_jX_j\leq x-A\)\nonumber\\
&\leq \(1+\frac{\varepsilon}{4(2\tilde{K}+1)}\)P(c_iX_i\in(x,x+1])+\frac{\varepsilon}{4(2\tilde{K}+1)}P(c_jX_j\in(x,x+1]),
\end{align}
and
\begin{align}\label{eq:k5}
AP\(\sum_{i=1}^{n}b|X_i|>A\)<\frac{\varepsilon}{8}
\end{align}
due to the finiteness of the expectation of $E X_1$.

Now we proceed to use induction to prove \eqref{eq:kesten}. Obviously, \eqref{eq:kesten} is trivial for $n=1$. Now we assume that \eqref{eq:kesten} holds for $n$. Next we aim to prove \eqref{eq:kesten} for $n+1$. To do so, denote by
\begin{align*}
\alpha_{n}=\sup_{(c_1,\ldots,c_n)\in [a,b]^n}\sup_{x>A}\left|\frac{P\(\sum\limits_{i=1}^nc_iX_i>x\)-\sum\limits_{i=1}^nP(c_iX_i>x)}{\sum\limits_{i=1}^nP(c_iX_i\in(x,x+1])}\right|.
\end{align*}
Next, based on the existing sufficiently large $A$, we split the probability $P\(\sum\limits_{i=1}^{n+1}c_iX_i>x\)$ into three parts as:
\begin{align}\label{eq:sum}
&\quad P\(\sum_{i=1}^{n+1}c_iX_i>x,c_{n+1}X_{n+1}\leq x-A\)+P\(\sum_{i=1}^{n+1}c_iX_i>x,c_{n+1}X_{n+1}> x\)\nonumber\\
&\quad+P\(\sum_{i=1}^{n+1}c_iX_i>x,x-A<c_{n+1}X_{n+1}\leq x\)\nonumber\\
&:=J_1(x)+J_2(x)+J_3(x).
\end{align}
For $J_1(x)$, by the definition of $\alpha_n$, \eqref{eq:k3} and \eqref{eq:k4}, it holds that
\begin{align*}
J_1(x)&=\int_{-\infty}^{x-A}P\left(\sum\limits_{i=1}^{n}c_iX_i>x-t\right)P(c_{n+1}X_{n+1}\in \d t)\nonumber\\
&=\int_{-\infty}^{x-A}\left[P\left(\sum\limits_{i=1}^{n}c_iX_i>x-t\right)-\sum_{i=1}^nP(c_iX_i>x-t)\right]P(c_{n+1}X_{n+1}\in \d t)\nonumber\\
&\quad+\sum_{i=1}^nP(c_iX_i+c_{n+1}X_{n+1}>x,c_{n+1}X_{n+1}\leq x-A)\nonumber
\end{align*}
\begin{align*}
&\leq\alpha_n\sum_{i=1}^n\int_{-\infty}^{x-A}P(c_iX_i\in (x-t,x-t+1])P(c_{n+1}X_{n+1}\in \d t)+\sum_{i=1}^n\(P(c_iX_i>x)\right.\nonumber\\
&\quad\left.+b\mu_FP(c_iX_i\in(x,x+1])+\frac{\varepsilon}{4(2\tilde{K}+1)}P(c_{n+1}X_{n+1}\in(x,x+1])\)\nonumber\\
&\leq \alpha_n\sum_{i=1}^n\(\(1+\frac{\varepsilon}{4(2\tilde{K}+1)}\)P(c_iX_i\in(x,x+1])+\frac{\varepsilon}{4(2\tilde{K}+1)}P(c_{n+1}X_{n+1}\in(x,x+1])\)\nonumber\\
&\quad+\sum_{i=1}^n\(P(c_iX_i>x)+\(b+\frac{\varepsilon}{4(2\tilde{K}+1)}\)P(c_iX_i\in(x,x+1])\right.\nonumber\\
&\quad\left.+\frac{\varepsilon}{4(2\tilde{K}+1)}P(c_{n+1}X_{n+1}\in(x,x+1])\)\nonumber\\
&\leq\sum_{i=1}^nP(c_iX_i>x)+\(\(1+\frac{\varepsilon}{4}\)\alpha_n+b\mu_F+\frac{\varepsilon}{4}\)\sum_{i=1}^nP(c_iX_i\in(x,x+1]),
\end{align*}
where in the last step we used the relation \eqref{eq:k2}.  For $J_2(x)$, by Lemmas 4.8 and 4.10 of \cite{Lin2020}, it is easy to check uniformly for $(c_1,\ldots,c_{n+1})\in [a,b]^n$ that,
\begin{align*}
J_2(x)&=P\(\sum_{i=1}^{n}c_iX_i>0,c_{n+1}X_{n+1}> x\)+P\(\sum_{i=1}^{n+1}c_iX_i>x,\sum_{i=1}^{n}c_iX_i\leq 0\)\nonumber\\
&=P\(\sum_{i=1}^{n}c_iX_i>0\)P(c_{n+1}X_{n+1}> x)+\int_{-\infty}^0P\(c_{n+1}X_{n+1}>x-y\)P\(\sum_{i=1}^{n}c_iX_i\in \d y\)\nonumber\\
&=P(c_{n+1}X_{n+1}> x)-\int_{-\infty}^0P\(c_{n+1}X_{n+1}\in (x,x-y]\)P\(\sum_{i=1}^{n}c_iX_i\in \d y\)\nonumber\\
&\leq P(c_{n+1}X_{n+1}> x)+\(\int_{-\infty}^0yP\(\sum_{i=1}^{n}c_iX_i\in \d y\)+\frac{\varepsilon}{8}\)P(c_{n+1}X_{n+1}\in (x,x+1])\nonumber\\
&\leq P(c_{n+1}X_{n+1}> x)+\(\int_{-\infty}^0yP\(\sum_{i=1}^{n}c_iX_i\in \d y\)+\frac{\varepsilon}{8}(1+\varepsilon)\)P(c_{n+1}X_{n+1}\in (x,x+1]).
\end{align*}
For $J_3(x)$, according to the relation \eqref{eq:k5}, we conclude
\begin{align*}
J_3(x)&=P\(\sum_{i=1}^{n}c_iX_i>A,x-A<c_{n+1}X_{n+1}\leq x\)\nonumber\\
&\quad+P\(\sum_{i=1}^{n+1}c_iX_i>x,0<\sum_{i=1}^{n}c_iX_i\leq A,c_{n+1}X_{n+1}\leq x\)\nonumber
\end{align*}
\begin{align*}
&=P\(\sum_{i=1}^{n}c_iX_i>A\)P(c_{n+1}X_{n+1}\in(x-A,x])\nonumber\\
&\quad+\int_0^AP(c_{n+1}X_{n+1}\in(x-y,x])P\(\sum_{i=1}^{n}c_iX_i\in \d y\)\nonumber\\
&\leq \((1+\varepsilon)AP\(\sum_{i=1}^{n}b|X_i|>A\)+\int_0^AyP\(\sum_{i=1}^{n}c_iX_i\in \d y\)\)P(c_{n+1}X_{n+1}\in(x,x+1])\nonumber\\
&\leq\(\int_0^{\infty}yP\(\sum_{i=1}^{n}c_iX_i\in \d y\)+\frac{\varepsilon}{8}(1+\varepsilon)\)P(c_{n+1}X_{n+1}\in (x,x+1]).
\end{align*}
Then, combining the inequalities above into the relation \eqref{eq:sum} yields that
\begin{align*}
P\(\sum_{i=1}^{n+1}c_iX_i>x\)&\leq\sum_{i=1}^{n+1}P(c_iX_i>x)+\(\(1+\frac{\varepsilon}{4}\)\alpha_n+b\mu_F+1\)\sum_{i=1}^nP(c_iX_i\in(x,x+1])\nonumber\\
&\quad+\(nb\mu_F+\frac{\varepsilon}{4}(1+\varepsilon)\)P(c_{n+1}X_{n+1}\in (x,x+1])\nonumber\\
&\leq\sum_{i=1}^{n+1}P(c_iX_i>x)+\(\(1+\frac{\varepsilon}{4}\)\alpha_n+(n+1)b\mu_F\)\sum_{i=1}^{n+1}P(c_iX_i\in(x,x+1]),
\end{align*}
which implies
\begin{align*}
\alpha_n\leq \Big(1+\frac{\varepsilon}{4}\Big)\alpha_{n-1}+nb\mu_F.
\end{align*}
By induction and in view of $\alpha_1=0$, it holds that
\begin{align}\label{eq:kineq}
\alpha_{n}\leq b\mu_F\sum_{i=0}^{n-2}(n-i)\(1+\frac{\varepsilon}{4}\)^i\leq b\mu_Fn^2\(1+\frac{\varepsilon}{4}\)^{n}.
\end{align}
It is easy to see that the right-hand side of \eqref{eq:kineq} does not exceed $K(1+\varepsilon)^n$ for an appropriately chosen constant $K$ and hence, the desired \eqref{eq:kesten} is proven.

Finally, by conditioning on $(\Theta_1,\ldots,\Theta_n)$, we conclude
\begin{align*}
&\quad\left|P\(\sum\limits_{i=1}^n\Theta_iX_i>x\)-\sum\limits_{i=1}^n P\(\Theta_iX_i>x\)\right|\nonumber\\
&\leq\int_a^b\cdots\int_a^b \left|P\(\sum\limits_{i=1}^nc_iX_i>x\)-\sum\limits_{i=1}^nP\(c_iX_i>x\)\right|P\(\Theta_1\in \d c_1,\cdots,\Theta_n\in \d c_n\)\nonumber\\
&\leq\int_a^b\cdots\int_a^bK\(1+\varepsilon\)^{n}\sum\limits_{i=1}^nP\(c_iX_i\in(x,x+1]\)P\(\Theta_1\in \d c_1,\cdots,\Theta_n\in \d c_n\)\nonumber\\
&=K\(1+\varepsilon\)^{n}\sum\limits_{i=1}^n P\(\Theta_iX_i>(x,x+1]\).
\end{align*}
Thus, this completes the proof of Lemma \ref{lem:rkesten}.

\begin{lemma}\label{lem:sum}
Let $\{Z_k,1\leq k\leq n\}$ be $n$ independent real-valued r.v.s with respective distributions ${U_1}, \cdots ,{U_n}$, and $\{\Theta_k,1\leq k\leq n\}$ be another $n$ positive and arbitrarily dependent r.v.s, independent of $\{Z_k,1\leq k\leq n\}$. For all $1 \le k \le n$, denote by $\mu_{Z_k}$ the finite expectation of $Z_k$. Assume that $Z_k\in\mathscr{L}_\Delta $ and $ Z_k(x,x+1]\asymp Z(x,x+1] $ for some ${Z} \in {\widetilde{\mathscr S}_2} $. If there exist two constants $ 0 < a < b < \infty $ such that $P(\Theta_k \in [a,b])=1$ for all $1 \le k \le n$, then it holds that
\begin{align*}
P\(\sum\limits_{k = 1}^n {{\Theta _k}{Z_k}}  > x\)&=\sum\limits_{k = 1}^n {P({\Theta _k}{Z_k} > x)}  + \sum\limits_{k = 1}^n {\sum\limits_{i \in \Lambda _n^k} {{\mu _{{Z_i}}}E({\Theta _i}{\mathbf{1}_{\{ {\Theta _k}{Z_k} \in (x,x + 1]\} }})} }\nonumber\\
&\quad+o\(\sum\limits_{k = 1}^n {P({\Theta _k}{Z_k} \in (x,x + 1])} \).\nonumber
\end{align*}
\end{lemma}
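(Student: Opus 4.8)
The plan is to reduce the randomly weighted statement to a deterministically weighted one and then to establish the latter by induction on $n$, in the same spirit as the proof of Lemma \ref{lem:rkesten}. First I would fix a vector $(c_1,\dots,c_n)\in[a,b]^n$, relabel so that $c_1=\max_i c_i$, and aim to prove the deterministic identity
\begin{align*}
P\Big(\sum_{k=1}^n c_k Z_k>x\Big)=\sum_{k=1}^n P(c_k Z_k>x)+\sum_{k=1}^n\sum_{i\in\Lambda_n^k}\mu_{Z_i}c_i\,P(c_k Z_k\in(x,x+1])+o\Big(\sum_{k=1}^n P(c_k Z_k\in(x,x+1])\Big),
\end{align*}
with the $o(\cdot)$ uniform over $[a,b]^n$ for each fixed $n$. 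The base case $n=1$ is trivial. For the inductive step I would condition on the value of $c_{n+1}Z_{n+1}$ and split $P(\sum_{k=1}^{n+1}c_kZ_k>x)$ exactly as in \eqref{eq:sum} into the three ranges $c_{n+1}Z_{n+1}\le x-A$, $c_{n+1}Z_{n+1}>x$, and $x-A<c_{n+1}Z_{n+1}\le x$. On the first range one uses the inductive hypothesis for the sum of the first $n$ terms together with the local-equivalence and second-order estimates from \cite{Lin2020} (the analogues of \eqref{eq:k3}, \eqref{eq:k4}); the key point is that, relative to the deterministic Kesten bound, the error term now has an \emph{extra} $o(1)$ factor, so the geometric blow-up $(1+\varepsilon)^n$ from Lemma \ref{lem:rkesten} is absorbed and one genuinely gets a vanishing remainder. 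On the middle range, writing $J_2$ as in the proof of Lemma \ref{lem:rkesten}, the term $\int_{-\infty}^0 y\,P(\sum_{i=1}^n c_iZ_i\in\d y)$ does not vanish but contributes exactly $\mu_{c_1Z_1+\dots+c_nZ_n}^{-}$-type mass; combined with the symmetric contribution of $J_3$ (the range $x-A<c_{n+1}Z_{n+1}\le x$), which supplies $\int_0^\infty y\,P(\sum_{i=1}^n c_iZ_i\in\d y)$, these two pieces add up to $E(\sum_{i=1}^n c_iZ_i)\cdot P(c_{n+1}Z_{n+1}\in(x,x+1])=\sum_{i=1}^n c_i\mu_{Z_i}\,P(c_{n+1}Z_{n+1}\in(x,x+1])$, which is precisely the new batch of cross terms required for $k=n+1$. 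Collecting everything and letting $A\to\infty$ (after $x\to\infty$) kills all the $\varepsilon$-controlled pieces, yielding the $n+1$ case.

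Having the deterministic identity, I would then integrate over the law of $(\Theta_1,\dots,\Theta_n)$ on $[a,b]^n$ exactly as in the final paragraph of the proof of Lemma \ref{lem:rkesten}: the main term integrates to $\sum_k P(\Theta_kZ_k>x)$, the cross terms integrate to $\sum_k\sum_{i\in\Lambda_n^k}\mu_{Z_i}E(\Theta_i\mathbf 1_{\{\Theta_kZ_k\in(x,x+1]\}})$, and the uniform $o(\sum_k P(c_kZ_k\in(x,x+1]))$ integrates to $o(\sum_k P(\Theta_kZ_k\in(x,x+1]))$; here one also uses $P(\Theta_kZ_k\in(x,x+1])\asymp Z_k(x/b,(x+1)/a]\asymp Z(x,x+1]$ for each $k$, so the integrated remainder is legitimately of the claimed order.

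The main obstacle I anticipate is the bookkeeping in the inductive step, namely verifying that the ``first-order'' boundary contributions from $J_2$ and $J_3$ recombine to give $E(\sum_{i=1}^n c_iZ_i)\,P(c_{n+1}Z_{n+1}\in(x,x+1])$ rather than being swept into the error — this is where Lemma \ref{lem:sum} genuinely differs from the Kesten inequality and where one must track the non-negligible $\int y\,P(\sum c_iZ_i\in\d y)$ terms carefully, using the second-order local properties of $\widetilde{\mathscr S}_2$ (via the cited lemmas of \cite{Lin2020}) to control the contribution of $\{\sum_{i=1}^n c_iZ_i>A\}$ uniformly in $n$. A secondary technical point is ensuring the uniformity of the $o(\cdot)$ over $(c_1,\dots,c_n)\in[a,b]^n$ for fixed $n$, which follows from the compactness of $[a,b]^n$ together with the fact that all the estimates from \cite{Lin2020} hold uniformly for weights in $[a,b]$.
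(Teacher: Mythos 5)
The paper does not actually prove this lemma: it simply states that the argument ``goes along the same lines as Theorem 3.1 of Lin (2020) with some obvious modifications'' and omits all details, so there is no line-by-line comparison to make. Your sketch is a reasonable reconstruction of what that omitted proof must look like, and it is consistent with the machinery the paper does display in the proof of Lemma \ref{lem:rkesten}: fix deterministic weights in $[a,b]^n$, induct on the number of summands using the decomposition \eqref{eq:sum}, and integrate out the weights at the end. You correctly identify the crux, namely that the boundary contributions of $J_2$ and $J_3$ must not be discarded into the error but recombined into $E\bigl(\sum_{i=1}^n c_iZ_i\bigr)P\bigl(c_{n+1}Z_{n+1}\in(x,x+1]\bigr)$ via the local long-tail property, which is exactly the new batch of cross terms for the index $n+1$. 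Two remarks. First, your concern about absorbing the geometric factor $(1+\varepsilon)^n$ is moot here: unlike Lemma \ref{lem:rkesten}, the present lemma is stated for a \emph{fixed} $n$ and the limit is only in $x$, so the induction has finitely many steps and no uniformity in $n$ is needed; what you do need (and correctly invoke) is uniformity of the $o(\cdot)$ over the compact cube $[a,b]^n$ so that the final integration over the law of $(\Theta_1,\dots,\Theta_n)$ is legitimate. Second, for the $J_1$ range you need two-sided asymptotic versions of the estimates \eqref{eq:k3} and \eqref{eq:k4} (the paper quotes them only as upper bounds, which suffices for the Kesten inequality but not for an asymptotic identity); these matching lower bounds are available in Lin (2020) and are precisely the ``obvious modifications'' the paper alludes to, so your proposal is sound provided you cite or reprove them in their asymptotic form.
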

\noindent\textbf{Proof.}
 The proof of Lemma \ref{lem:sum} can be done by going along the same lines of as that of Theorem 3.1 of \cite{Lin2020} by some obvious modifications.  To save space, we omit it here and this ends the proof.

\vspace{2mm}

\noindent\textbf{Proof of Theorem \ref{the:claim}.}
For the convenience of presentation, we write $\Omega_{n}(t) =\{(s_1,\ldots,s_{n+1})\in(0,\infty)^{n+1}: t_i:=\sum_{i=1}^ns_{i}\leq t<\sum_{i=1}^{n+1}s_{i}\}$ for each $t\in\Lambda_T$. For some sufficiently large positive integer $N$, it holds that
\begin{align*}
P\(D_r(t)>x\)&=\sum_{n=1}^\infty\(P\(\sum_{i=1}^nX_ie^{-r\tau_i}\mathbf{1}_{\{N(t)=n\}}>x\)-\sum_{i=1}^nP\(X_ie^{-r\tau_i}\mathbf{1}_{\{N(t)=n\}}>x\)\) \nonumber\\
&\quad +\sum_{n=1}^\infty\sum_{i=1}^nP\(X_ie^{-r\tau_i}\mathbf{1}_{\{N(t)=n\}}>x\)\nonumber\\
&=\(\sum_{n=N+1}^\infty+\sum_{n=1}^N\)\(P\(\sum_{i=1}^nX_ie^{-r\tau_i}\mathbf{1}_{\{N(t)=n\}}>x\)-\sum_{i=1}^nP\(X_ie^{-r\tau_i}\mathbf{1}_{\{N(t)=n\}}>x\)\)\nonumber\\
&\quad +\sum_{i=1}^\infty P\(X_ie^{-r\tau_i}>x,N(t)\geq i\)\nonumber\\
&:=I_1(x;t)+I_2(x;t)+\int_{0-}^t\overline{F}(xe^{ru}) \lambda (\d u).
\end{align*}
To deal with $I_1(x;t)$, by Lemma \ref{lem:rkesten} above and Lemma 4.6 of \cite{Lin2020}, there exist a constant $C>0$ such that uniformly for $t\in\Lambda_T$,
\begin{align*}
I_1(x;t)&\leq \sum_{n=N+1}^\infty K(1+\varepsilon)^n\sum_{i=1}^n\idotsint\limits_{\Omega_{n}(t)}P(X_ie^{-rt_i}\in(x,x+1])\prod_{l=1}^{n+1}G(\d s_l)\nonumber
\end{align*}
\begin{align*}
&=\sum_{n=N+1}^\infty K(1+\varepsilon)^n\sum_{i=1}^n\int_{0-}^t\int_{0-}^{t-u}F(xe^{r(u+v)},(x+1)e^{r(u+v)}]P(N(t-u-v)=n-i)\nonumber\\
&\quad \times P(\tau_i-\tau_1\in \d v)P(\tau_1\in \d u)\nonumber\\
&\leq \sum_{n=N+1}^\infty K(1+\varepsilon)^n\sum_{i=1}^n\int_{0-}^t\int_{0-}^{t-u}Ce^{rv}F(xe^{ru},(x+1)e^{ru}]P(N(t-u-v)=n-i)\nonumber\\
&\quad \times P(\tau_i-\tau_1\in \d v)P(\tau_1\in \d u)\nonumber\\
&\leq Ce^{rt}\sum_{n=N+1}^\infty K(1+\varepsilon)^nn\int_{0-}^tF(xe^{ru},(x+1)e^{ru}]P(N(t-u)=n-1)P(\tau_1\in \d u)\nonumber\\
&\leq C K e^{rt}\int_{0-}^tF(xe^{ru},(x+1)e^{ru}]E\left[(1+\varepsilon)^{N(t-u)}N(t-u)\mathbf{1}_{\{N(t-u)\geq N\}}\right]\lambda (\d u)\nonumber\\
&\leq C K e^{rt}E\left[(1+\varepsilon)^{N(t)}N(t)\mathbf{1}_{\{N(t)\geq N\}}\right]\int_{0-}^tF(xe^{ru},(x+1)e^{ru}]\lambda (\d u)\nonumber\\
&=o\(\int_{0-}^tF(xe^{ru},(x+1)e^{ru}]\lambda (\d u)\),
\end{align*}
where in the last step we used the well-known fact that moment generating function of $N(t)$ is analytic in a neighborhood of 0; see, e.g. \cite{S1946}.
For $I_2(x;t)$, according to Lemma \ref{lem:sum}, it holds uniformly for all $t\in\Lambda_T$ that
\begin{align*}
I_2(x;t)&=\mu_F\sum_{n=1}^N\idotsint\limits_{\Omega_{n}(t)}\sum_{i=1}^n\sum_{k\in\Lambda_n^i}e^{-rt_k}P(X_ie^{-rt_i}\in(x,x+1])\prod_{l=1}^{n+1}G(\d s_l)\nonumber\\
&\quad +o\(\sum_{n=1}^N\sum_{i=1}^n\idotsint\limits_{\Omega_{n}(t)}P(X_ie^{-rt_i}\in(x,x+1])\prod_{l=1}^{n+1}G(\d s_l)\)\nonumber\\
&=\mu_F\(\sum_{n=1}^\infty-\sum_{n=N+1}^\infty\)\sum_{i=1}^n\sum_{k\in\Lambda_n^i}E\left[e^{-r\tau_k}\mathbf{1}_{\{X_ie^{-r\tau_i}\in(x,x+1],N(t)=n\}}\right]\nonumber\\
&\quad +o\(\sum_{n=1}^NP\(X_ie^{-r\tau_i}>x,N(t)\geq i\)\)\nonumber\\
&:=I_{21}(x;t)+I_{22}(x;t)+o\(\int_{0-}^tF(xe^{ru},(x+1)e^{ru}]\lambda (\d u)\).
\end{align*}
For $I_{21}(x;t)$,  by interchanging the orders of the sum, it holds uniformly for all $t\in\Lambda_T$ that
\begin{align*}
I_{21}(x;t)&=\mu_F\sum_{n=1}^\infty\sum_{i=1}^n\(\sum_{k<i}+\sum_{k>i}^n\)E\left[e^{-r\tau_k}\mathbf{1}_{\{X_ie^{-r\tau_i}\in(x,x+1],N(t)=n\}}\right]\nonumber\\
&=\mu_F\sum_{i=1}^\infty\sum_{n=i}^\infty\(\sum_{k<i}\int_{0-}^t\int_{0-}^{t-u}e^{-r(u+v)}F(xe^{ru},(x+1)e^{ru}]P(N(t-u-v)=n-i)\right.\nonumber\\
&\quad \times P(\tau_k\in\d u)P(\tau_i-\tau_k\in \d v)+\sum_{k>i}^n\int_{0-}^t\int_{0-}^{t-u}e^{-ru}F(xe^{r(u+v)},(x+1)e^{r(u+v)}]\nonumber\\
&\quad \left.\times P(N(t-u-v)=n-k)P(\tau_i\in\d u)P(\tau_k-\tau_i\in \d v)\)\nonumber\\
&=\mu_F\int_{0-}^t\int_{0-}^{t-u}e^{-r(u+v)}F(xe^{ru},(x+1)e^{ru}]+e^{-ru}F(xe^{r(u+v)},(x+1)e^{r(u+v)}]\lambda (\d u)\lambda (\d v)\nonumber\\
&=\mu_F\varphi_{F;\lambda,\lambda}(x;t).
\end{align*}
For $I_{22}(x;t)$, by Lemma 4.6 of \cite{Lin2020}, there exist a constant $C>0$, we have that
\begin{align*}
I_{22}(x;t)&\leq\mu_F\sum_{n=N+1}^\infty\sum_{i=1}^{n}(n-1)P(X_ie^{-r\tau_i}\in(x,x+1],N(t)=n)\nonumber\\
&= \mu_F\sum_{n=N+1}^\infty\sum_{i=1}^{n}(n-1)\int_{0-}^t\int_{0-}^{t-u}F(xe^{r(u+v)},(x+1)e^{r(u+v)}]P(N(t-u-v)=n-i)\nonumber\\
&\quad \times P(\tau_1\in\d u)P(\tau_i-\tau_1\in \d v)\nonumber\\
&\leq Ce^{rt}\mu_F\sum_{n=N+1}^\infty n(n-1)\int_{0-}^tF(xe^{ru},(x+1)e^{ru}]P(N(t-u)=n-1) P(\tau_1\in\d u)\nonumber\\
&=o\(\int_{0-}^tF(xe^{ru},(x+1)e^{ru}]\lambda (\d u)\).
\end{align*}
Combining all these results indicates the desired Theorem \ref{the:claim} and this ends the proof.

\vspace{2mm}

\begin{lemma}\label{lem:sum1}
Under the conditions of Theorem \ref{the:by-claim}, we have that
\begin{align}\label{eq:s1}
 \sum\limits_{n = 1}^\infty\sum\limits_{k = 1}^n {\sum\limits_{i = 1}^n}{E\({e^{ - r({\tau _i} + {D_i})}}{\mathbf{1}_{\{ N(t) = n,{\tau _i} + {D_i} \le t\} }}{\mathbf{1}_{\{ {X_k}{e^{ - r{\tau _k}}} \in (x,x + 1]\} }}\)}
=\widetilde{\varphi}_{F}(x;t).
\end{align}
\end{lemma}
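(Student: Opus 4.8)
The identity \eqref{eq:s1} is a pure rearrangement-and-conditioning computation, with no asymptotics involved, so the plan is as follows. First I would interchange the order of summation: all summands are nonnegative, so by Tonelli's theorem one may sum over $n$ first. For fixed indices $k$ and $i$ the term occurs precisely for $n\ge\max\{i,k\}$, and since the events $\{N(t)=n\}$ are disjoint with $\bigcup_{n\ge m}\{N(t)=n\}=\{N(t)\ge m\}=\{\tau_m\le t\}$, summing $\mathbf 1_{\{N(t)=n\}}$ over these $n$ yields $\mathbf 1_{\{\tau_{\max\{i,k\}}\le t\}}$. Hence the left-hand side of \eqref{eq:s1} equals
\[
\sum_{i=1}^\infty\sum_{k=1}^\infty E\Big(e^{-r(\tau_i+D_i)}\mathbf 1_{\{\tau_i+D_i\le t,\ \tau_{\max\{i,k\}}\le t\}}\mathbf 1_{\{X_ke^{-r\tau_k}\in(x,x+1]\}}\Big).
\]

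Next I would split this double sum into the three regimes $k>i$, $k=i$ and $k<i$ and evaluate each by conditioning on the relevant arrival and delay variables, using that $\{X_k\}$, $\{D_k\}$ and the renewal sequence $\{\tau_k\}$ are mutually independent and that the independent-increments structure gives $\sum_{k>i}P(\tau_k-\tau_i\in\cdot)=\lambda(\cdot)$ and $\sum_{i>k}P\big((\tau_i-\tau_k)+D_i\in\cdot\big)=(\lambda*H)(\cdot)$. For $k>i$: write $v=\tau_i$, $u=\tau_k-\tau_i$, $s=D_i$; both constraints are binding and become $u\le t-v$ and $s\le t-v$, the discount factor is $e^{-r(v+s)}$, and $P(X_ke^{-r\tau_k}\in(x,x+1]\mid\cdots)=F(xe^{r(u+v)},(x+1)e^{r(u+v)}]$; summing over $i<k$ produces exactly the third term of $\widetilde{\varphi}_F(x;t)$. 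For $k=i$: the constraint $\tau_i+D_i\le t$ already forces $\tau_i\le t$, so conditioning on $v=\tau_i$, $s=D_i$ gives $e^{-r(v+s)}F(xe^{rv},(x+1)e^{rv}]\mathbf 1_{\{s\le t-v\}}$, and summing over $i$ gives the second term. For $k<i$: now $\tau_i+D_i\le t$ forces $\tau_{\max\{i,k\}}=\tau_i\le t$, so write $u=\tau_k$ and $v=(\tau_i-\tau_k)+D_i$; then $e^{-r(\tau_i+D_i)}=e^{-r(u+v)}$, the constraint is $v\le t-u$, and $P(X_ke^{-r\tau_k}\in(x,x+1]\mid\cdots)=F(xe^{ru},(x+1)e^{ru}]$; summing over $i>k$ turns the law of $(\tau_i-\tau_k)+D_i$ into $\lambda*H$ and the law of $\tau_k$ into $\lambda$, yielding the first term of $\widetilde{\varphi}_F(x;t)$. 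Adding the three contributions gives $\widetilde{\varphi}_F(x;t)$, as claimed.

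The routine points that need care are: justifying the interchange of summations and the two renewal identities above (both follow from the independent-increments property of $\{\tau_k\}$ together with the i.i.d./independence assumptions, while finiteness of every integral is automatic since $\lambda(T)<\infty$, $H$ is a probability measure, and all integrands are bounded by $1$ because $r\ge0$); and correctly identifying in each regime which of the two constraints $\tau_i+D_i\le t$ and $\tau_{\max\{i,k\}}\le t$ is active — only for $k>i$ are both genuinely binding, which is precisely why the third term of $\widetilde{\varphi}_F$ carries two upper limits $t-v$. I expect the main obstacle to be purely organizational: keeping the roles of $\tau_i$ (carrying the discount $e^{-r(\tau_i+D_i)}$), $\tau_k$ (carrying the local tail $F(\cdot,\cdot]$) and $D_i$ straight across the three cases so that each contribution matches the corresponding term of $\widetilde{\varphi}_F(x;t)$ exactly.
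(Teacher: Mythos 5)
Your proposal is correct and follows essentially the same route as the paper: the paper likewise interchanges the summations (splitting into $i\le k$ and $i>k$, then separating $k=i$ from $k>i$), conditions on $(\tau_i,D_i)$, $(\tau_i,\tau_k-\tau_i,D_i)$ and $(\tau_k,\tau_i-\tau_k+D_i)$ in the three regimes, and uses the same renewal identities to produce the three terms of $\widetilde{\varphi}_F(x;t)$. Your remark about which of the two constraints is binding in each case matches the paper's computation exactly.
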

\textbf{Proof.} Firstly, by interchanging the order of the summations twice, the left-hand of \eqref{eq:s1} corresponds to
\begin{align*}
&\quad \(\sum\limits_{k = 1}^\infty  \sum\limits_{i = 1}^k \sum\limits_{n = k}^\infty   +  \sum\limits_{k = 1}^\infty  \sum\limits_{i = k + 1}^\infty  \sum\limits_{n = i}^\infty  \) E({e^{ - r({\tau _i} + {D_i})}}{\mathbf{1}_{\{ N(t) = n,{\tau _i} + {D_i} \le t\} }}{\mathbf{1}_{\{ {X_k}{e^{ - r{\tau _k}}} \in (x,x + 1]\} }})\nonumber\\
&:= {L_{1}} + {L_{2}}.
\end{align*}
For $L_{1}$, interchanging the order of the summations again, it holds that
\begin{align*}
{L_{1}}&=\sum\limits_{i = 1}^\infty  {\sum\limits_{k = i}^\infty  {E\big({e^{ - r({\tau _i} + {D_i})}}{\mathbf{1}_{\{ {\tau _k} \le t, {\tau _i} + {D_i} \le t\} }}{\mathbf{1}_{\{ {X_k}{e^{ - r{\tau _k}}} \in (x,x + 1]\} }}\big)} } \nonumber\\
&= \sum\limits_{i = 1}^\infty  {E\big({e^{ - r({\tau _i} + {D_i})}}{\mathbf{1}_{\{ {\tau _i} \le t,{\tau _i} + {D_i} \le t\} }}{\mathbf{1}_{\{ {X_i}{e^{ - r{\tau _i}}} \in (x,x + 1]\} }}\big)}\nonumber\\
&\quad+\sum\limits_{i = 1}^\infty  {\sum\limits_{k = i + 1}^\infty  {E\big({e^{ - r({\tau _i} + {D_i})}}{\mathbf{1}_{\{ {\tau _k} \le t, {\tau _i} + {D_i} \le t\} }}{\mathbf{1}_{\{ {X_k}{e^{ - r{\tau _k}}} \in (x,x + 1]\} }}\big)} } \nonumber\\
&:= {L_{11}} + {L_{12}}.
\end{align*}
We first deal with $L_{11}$. Obviously, conditioning on $(\tau_i,D_i)$ yields that
\begin{align*}
{L_{11}} = \int_{{0^ - }}^t {\int_{{0^ - }}^{t - v} {{e^{ - r(v + s)}}F(x{e^{rv}},(x + 1){e^{rv}}]} H(\d s)} \lambda (\d v).
\end{align*}
In a similar way, for $L_{12}$, by noticing the independence of $\tau_i$ and $\tau_k-\tau_i$ and conditioning on $(\tau_i,\tau_k-\tau_i,D_i)$, it holds that
\begin{align*}
{L_{12}} = \int_{{0^ - }}^t {\int_{{0^ - }}^{t - v} {\int_{{0^ - }}^{t - v} {{e^{ - r(v + s)}}F(x{e^{r(u + v)}},(x + 1){e^{r(u + v)}}]H(\d s)} \lambda (\d u)} \lambda (\d v)}.
\end{align*}
Now we turn to treat $L_{2}$. Similarly done as in the treatment of $L_{1}$, by conditioning on $(\tau_k,\tau_i-\tau_k+D_k)$ and noticing that the identically-distributed property of $D_k$, it holds that
\begin{align*}
{L_{2}} &= \sum\limits_{k = 1}^\infty  {\sum\limits_{i = k + 1}^\infty  {E\big({e^{ - r({\tau _i} - {\tau _k} + {\tau _k} + {D_i})}}{\mathbf{1}_{\{ {\tau _i} - {\tau _k} + {\tau _k} \le t,{\tau _i} - {\tau _k} + {\tau _k} + {D_i} \le t\} }}{\mathbf{1}_{\{ {X_k}{e^{ - r{\tau _k}}} \in (x,x + 1]\} }}\big)} } \nonumber\\
&= \int_{{0^ - }}^t {\int_{{0^ - }}^{t - u} {{e^{ - r(u + v)}}F(x{e^{ru}},(x + 1){e^{ru}}](\lambda  * H)(\d v)} } \lambda (\d u).
\end{align*}
Summarizing the results obtained above yields the desired result and this ends the proof.

\begin{lemma}\label{lem:sum2}
 Under the conditions of Theorem \ref{the:by-claim}, we have that
\begin{align}\label{eq:s2}
\sum\limits_{n = 1}^\infty\sum\limits_{k = 1}^n {\sum\limits_{i = 1}^n}  {E\big({e^{ - r{\tau _i}}}{\mathbf{1}_{\{ N(t) = n,{\tau _k} + {D_k} \le t\} }}{\mathbf{1}_{\{ {Y_k}{e^{ - r{(\tau _k+D_k)}}} \in (x,x + 1]\} }}\big)}
= \widetilde{\varphi}_{G}(x;t).
\end{align}
\end{lemma}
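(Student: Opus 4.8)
The plan is to follow the same strategy as in the proof of Lemma \ref{lem:sum1}: repeatedly interchange the orders of the (nonnegative, hence freely exchangeable by Tonelli) summations, and then condition on appropriate tuples of renewal epochs and delay times, using that for fixed $j_0$ the increments $\tau_j-\tau_{j_0}$ ($j>j_0$) are distributed as $\tau_1,\tau_2,\dots$ and are independent of $\tau_{j_0}$, so that summing $P(\tau_j-\tau_{j_0}\in\d\cdot)$ over $j$ reproduces the renewal measure $\lambda$ (and, when an independent delay $D_j\sim H$ is bundled in, the convolution $\lambda*H$). First I would sum out $n$: since $1\le k\le n$ and $1\le i\le n$, write $\sum_{n=1}^\infty\sum_{k=1}^n\sum_{i=1}^n=\sum_{k=1}^\infty\sum_{i=1}^\infty\sum_{n=\max(k,i)}^\infty$ and split according to $i\le k$ versus $i>k$; the inner sum over $n$ then collapses $\mathbf{1}_{\{N(t)=n\}}$ to $\mathbf{1}_{\{\tau_k\le t\}}$ in the first case and to $\mathbf{1}_{\{\tau_i\le t\}}$ in the second, so that the left-hand side of \eqref{eq:s2} becomes $M_1+M_2$, with $M_1$ the sum over $1\le i\le k$ and $M_2$ the sum over $i>k$.

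Next I would treat $M_1$, further separating the diagonal $i=k$ from $i<k$, i.e. $M_1=M_{11}+M_{12}$. In $M_{11}$ the weight is $e^{-r\tau_k}$ while the large summand is $Y_ke^{-r(\tau_k+D_k)}$; since $\{\tau_k+D_k\le t\}\subseteq\{\tau_k\le t\}$, conditioning on $(\tau_k,D_k)$ and summing over $k$ yields exactly the second term of $\widetilde{\varphi}_{G}(x;t)$. For $M_{12}$ (with $i<k$), the key step --- rather than conditioning on $\tau_k-\tau_i$ and $D_k$ separately --- is to condition on $\tau_i$ together with the \emph{bundled} quantity $W_k:=\tau_k+D_k-\tau_i$; because $\tau_i$ is independent of $(\tau_k-\tau_i,D_k)$, summing over $k>i$ turns the law of $W_k$ into $\lambda*H$, the constraint $\tau_k+D_k\le t$ (which also forces $\tau_i\le t$) becomes $W_k\le t-\tau_i$, and the discount on $Y_k$ is $e^{-r(\tau_i+W_k)}$; this reproduces the first term of $\widetilde{\varphi}_{G}(x;t)$. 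Finally, for $M_2$ (with $i>k$) I would condition on the three mutually independent variables $\tau_k$, $\tau_i-\tau_k$ and $D_k$, whose summed laws are $\lambda(\d v)$, $\lambda(\d u)$ and $H(\d s)$; here the two surviving constraints $\tau_k+D_k\le t$ and $\tau_i\le t$ act on $(v,s)$ and on $(v,u)$, each giving upper limit $t-v$, and the weight $e^{-r\tau_i}=e^{-r(v+u)}$ together with the discount $e^{-r(\tau_k+D_k)}=e^{-r(v+s)}$ on $Y_k$ reproduces the third term of $\widetilde{\varphi}_{G}(x;t)$. Adding $M_{11}+M_{12}+M_2$ gives the identity.

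The bulk of the work is routine: plain conditioning, a few changes of variable, and identification of the renewal measure. The one genuinely delicate point is spotting that in $M_{12}$ the right object to condition on is the combined delay $\tau_k+D_k-\tau_i$, so that its law collapses to $\lambda*H$ and the region of integration comes out cleanly as $\{W_k\le t-\tau_i\}$, matching the first term of $\widetilde{\varphi}_{G}$ exactly; conditioning on the pieces separately would leave a more awkward domain. Apart from that, one only needs to note that all exchanges of infinite sums and integrals are legitimate because every integrand is nonnegative, and that $\lambda$ (hence $\lambda*H$) is finite on $[0,T]$ by assumption, so all quantities involved are finite.
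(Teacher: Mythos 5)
Your proposal is correct and follows essentially the same route as the paper: interchange the summations so that $n$ is summed out (collapsing $\mathbf{1}_{\{N(t)=n\}}$ to $\mathbf{1}_{\{\tau_{\max(i,k)}\le t\}}$), split into the cases $i=k$, $i<k$ and $i>k$, and condition on $(\tau_k,D_k)$, on $(\tau_i,\tau_k-\tau_i+D_k)$ (whose summed law is $\lambda*H$), and on $(\tau_k,\tau_i-\tau_k,D_k)$ respectively, recovering the three terms of $\widetilde{\varphi}_{G}(x;t)$. The bundling of $\tau_k-\tau_i+D_k$ that you single out as the delicate point is exactly the conditioning used in the paper.
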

\textbf{Proof.} Since the proof of this lemma is quite similar to the one of Lemma \ref{lem:sum1}, we only state the proof skeletons to save space. Firstly interchanging the order of the summations twice in the left-hand of \eqref{eq:s2} leads to
\begin{align*}
& \(\sum\limits_{k = 1}^\infty  {\sum\limits_{i = 1}^k {\sum\limits_{n = k}^\infty   +  } } \sum\limits_{k = 1}^\infty  {\sum\limits_{i = k + 1}^\infty  {\sum\limits_{n = i}^\infty  } }\) E\({e^{ - r{\tau _i}}}{\mathbf{1}_{\{ N(t) = n,{\tau _k} + {D_k} \le t\} }}{\mathbf{1}_{\{ {Y_k}{e^{ - r({\tau _k} + {D_k})}} \in (x,x + 1]\} }}\)\nonumber\\
:=&~ {K_{1}} + {K_{2}}.
\end{align*}
Similarly to $L_1$, we obtain
\begin{align*}
{K_{1}} &=\sum\limits_{i = 1}^\infty  {E\big({e^{ - r{\tau _i}}}{\mathbf{1}_{\{ {\tau _i} + {D_i} \le t\} }}{\mathbf{1}_{\{ {Y_i}{e^{ - r({\tau _i} + Di)}} \in (x,x + 1]\} }}\big)} \nonumber\\
&\quad+ \sum\limits_{i = 1}^\infty  {\sum\limits_{k = i + 1}^\infty  {E\big({e^{ - r{\tau _i}}}{\mathbf{1}_{\{ {\tau _k} + {D_k} \le t\} }}{\mathbf{1}_{\{ {Y_k}{e^{ - r({\tau _k} + {D_k})}} \in (x,x + 1]\} }}\big)} }\nonumber\\
&= {K_{11}} + {K_{12}}.
\end{align*}
For $K_{11}$, conditioning on $(\tau_i,D_i)$ implies that
\begin{align*}
{K_{11}} =\int_{{0^ - }}^t {\int_{{0^ - }}^{t - v} {{e^{ - rv}}G(x{e^{r(v + s)}},(x + 1){e^{r(v + s)}}]H(\d s)} } \lambda (\d v).
\end{align*}
Similarly, by conditioning on $(\tau_i,\tau_k-\tau_i+D_i)$, we arrive at
\begin{align*}
{K_{12}} &= \sum\limits_{i = 1}^\infty  {\sum\limits_{k = i + 1}^\infty  {E\big({e^{ - r{\tau _i}}}{\mathbf{1}_{\{ {\tau _k} - {\tau _i} + {\tau _i}  + {D_k} \le t\} }}{\mathbf{1}_{\{ {Y_k}{e^{ - r({\tau _k} - {\tau _i} + {\tau _i}  + {D_k})}} \in (x,x + 1]\} }}\big)} }\nonumber\\
&=\int_{{0^ - }}^t {\int_{{0^ - }}^{t - v} {{e^{ - rv}}G(x{e^{r(u + v)}},(x + 1){e^{r(u + v)}}](\lambda  * H)(\d u)} } \lambda (\d v).
\end{align*}
For $K_{2}$, it holds that
\begin{align*}
{K_{2}} &= \sum\limits_{k = 1}^\infty  {\sum\limits_{i = k + 1}^\infty  {E\big({e^{ - r({\tau _i} - {\tau _k} + {\tau _k})}}{\mathbf{1}_{\{ {\tau _i} - {\tau _k} + {\tau _k} \le t,{\tau _k} + {D_k} \le t\} }}{\mathbf{1}_{\{ {Y_k}{e^{ - r({\tau _k} + {D_k})}} \in (x,x + 1]\} }}\big)} } \nonumber\\
&= \int_{{0^ - }}^t {\int_{{0^ - }}^{t - v} {\int_{{0^ - }}^{t - v} {{e^{ - r(u + v)}}G(x{e^{r(v + s)}},(x + 1){e^{r(v + s)}}]} H(\d s)} \lambda (\d u)} \lambda (\d v).
\end{align*}
Thus, the desired result can be achieved and this ends the proof.

\vspace{2mm}

\noindent\textbf{Proof of Theorem \ref{the:by-claim}.} In this proof, we still use the notations introduced in Section \ref{sec:by-claim}. Firstly arbitrarily choosing some large integer $m$ and using the same decomposition method used in Theorem \ref{the:claim}, it follows from Lemma 4.3 of \cite{YL2019} that
\begin{align}\label{eq:Q}
P\(L(t)>x\)&= P\(\sum\limits_{k = 1}^{N(t)} {{X_k}{e^{ - r{\tau _k}}}}  + \sum\limits_{k = 1}^{N(t)} {{Y_k}{e^{ - r({\tau _k} + {D_k})}}{\mathbf{1}_{\{ {\tau _k} + {D_k} \le t\} }}}  > x\)\nonumber\\
&= \sum\limits_{n =  1}^\infty   P\(\sum\limits_{k = 1}^n {({X_k}{e^{ - r{\tau _k}}} + {Y_k}{e^{ - r({\tau _k} + {D_k})}}{\mathbf{1}_{\{ {\tau _k} + {D_k} \le t\} }})}  > x,N(t) = n\)\nonumber
\end{align}
\begin{align}
&=\(\sum\limits_{n =  1}^m+\sum_{n=m+1}^\infty\)\(P\(\sum\limits_{k = 1}^n {({X_k}{e^{ - r{\tau _k}}}{\mathbf{1}_{\{ N(t) = n\} }} + {Y_k}{e^{ - r({\tau _k} + {D_k})}}{\mathbf{1}_{\{ N(t) = n,{\tau _k} + {D_k} \le t\} }})}  > x\)\right.\nonumber\\
&\left.\quad-\sum\limits_{k = 1}^nP\({({X_k}{e^{ - r{\tau _k}}} + {Y_k}{e^{ - r({\tau _k} + {D_k})}}{\mathbf{1}_{\{ {\tau _k} + {D_k} \le t\} }})}  > x,N(t) = n\)\)\nonumber\\
&\quad+\sum\limits_{n =  1}^\infty\sum\limits_{k = 1}^nP\({({X_k}{e^{ - r{\tau _k}}} + {Y_k}{e^{ - r({\tau _k} + {D_k})}}{\mathbf{1}_{\{ {\tau _k} + {D_k} \le t\} }})}  > x,N(t) = n\)\nonumber\\
&:= {Q_1} + {Q_2}+{Q_3}.
\end{align}
For $Q_1$, applying Lemma \ref{lem:sum} with taking $Z_k$ as $X_k$ and ${\Theta _k}$ as $ {e^{ - r{\tau _k}}}{\mathbf{1}_{\{ N(t) = n\} }}$ for $1 \le k \le n$, while taking ${Z_k}$ as
as $ {Y_{k - n}}$ and ${\Theta _k}$ as ${e^{ - r({\tau _{k - n}} + {D_{k - n}})}}{\mathbf{1}_{\{ N(t) = n,{\tau _{k - n}} + {D_{k - n}} \le t\} }}$  for $n + 1 \le k \le 2n$, it yields that
\begin{align}\label{eq:Q1}
{Q_1} &=\sum\limits_{n =  1}^m\(\sum\limits_{k = 1}^{2n} {\sum\limits_{i \in \Lambda _{2n}^k} {\mu _{{Z_i}}}{E({\Theta _i}{\mathbf{1}_{\{ {\Theta _k}{Z_k} \in (x,x + 1]\} }}} )}  + o\(\sum\limits_{k = 1}^n {P({X_k}{e^{ - r{\tau _k}}}{\mathbf{1}_{\{ N(t) = n\} }} \in (x,x + 1])}\right.\right.\nonumber\\
&\quad\left. \left.+\sum\limits_{k = 1}^n {P({Y_k}{e^{ - r({t_k} + {D_k})}}{\mathbf{1}_{\{ N(t) = n,{\tau _k} + {D_k} \le t\} }} \in (x,x + 1])} \)\) \nonumber\\
&:= {Q_{11}} + {Q_{12}} + {Q_{13}}.
\end{align}
For $Q_{12}$ and $Q_{13}$, it is easy to check that, as $x\rightarrow\infty$ first and then $m\rightarrow\infty$,
\begin{align}\label{eq:Q12}
Q_{12}=o\(\int_{0-}^tF(xe^{ru},(x+1)e^{ru}]\lambda (\d u)\)
\end{align}
and
\begin{align}\label{eq:Q13}
Q_{13}=o\(\int_{{0^ - }}^t {\overline G(x{e^{ru}},(x+1){e^{ru}}]} (\lambda  * H)(\d u)\).
\end{align}
To deal with $Q_{11}$, separately partitioning the second and third summation yields that
\begin{align}\label{eq:Q11}
{Q_{11}}&= \sum\limits_{n = 1}^m {\sum\limits_{k = 1}^n {\(\sum\limits_{i \in \Lambda _n^k}  +  \sum\limits_{i = n + 1}^{2n} \) {\mu _{{Z_i}}}E({\Theta _i}{\mathbf{1}_{\{ {\Theta _k}{Z_k} \in (x,x + 1]\} }})} } \nonumber\\
&\quad+\sum\limits_{n = 1}^m {\sum\limits_{k = n + 1}^{2n} {\(\sum\limits_{i = 1}^n  +  \sum\limits_{i \in \Lambda _{2n}^k\setminus \{ 1, \cdots ,n\} } \) {\mu _{{Z_i}}}E({\Theta _i}{\mathbf{1}_{\{ {\Theta _k}{Z_k} \in (x,x + 1]\} }})} } \nonumber\\
&:= {Q_{111}} + {Q_{112}} + {Q_{113}} + {Q_{114}}.
\end{align}
For ${Q_{111}}$, according to the definition of ${Z_k}$ and ${\Theta _k}$ as well as the treatment of $I_{2}(x;t)$ in the proof of Theorem \ref{the:claim}, as $x\rightarrow\infty$ first and then $m\rightarrow\infty$, it holds uniformly for all $t \in \Lambda_T$ that
\begin{align}\label{eq:Q111}
{Q_{111}}=\mu_F\varphi_{F;\lambda,\lambda}(x;t)+o\(\int_{{0^ - }}^t {F(x{e^{rv}},(x + 1){e^{rv}}]} \lambda (\d v)\).
\end{align}
For $Q_{114}$, using a similar treatment as used in $Q_{111}$ with choosing ${Z_k}$ as ${Y_{k - n}}$ and choosing ${\Theta _k}$ as ${e^{ - r({\tau _{k - n}} + {D_{k - n}})}}{\mathbf{1}_{\{ N(t) = n,{\tau _{k - n}} + {D_{k - n}} \le t\} }}$ for $n + 1 \le k \le 2n$, thus, we obtain that, as $x\rightarrow\infty$ first and then $m\rightarrow\infty$,
\begin{align}\label{eq:Q114}
{Q_{114}}= \mu_G\varphi_{G;\lambda*H,\lambda}(x;t)+o\(\int_{{0^ - }}^t {\overline G(x{e^{ru}},(x+1){e^{ru}}]} (\lambda  * H)(\d u)\).
\end{align}
For ${Q_{112}}$, it is obvious that
\begin{align*}
{Q_{112}}&= {\mu _G}\sum\limits_{n = 1}^m {\sum\limits_{k = 1}^n {\sum\limits_{i = n + 1}^{2n} {E\big({e^{ - r({\tau _{i - n}} + {D_{i - n}})}}{\mathbf{1}_{\{ N(t) = n,{\tau _{i - n}} + {D_{i - n}} \le t\} }}{\mathbf{1}_{\{ {X_k}{e^{ - r{\tau _k}}} \in (x,x + 1]\} }}\big)} } } \nonumber\\
& ={\mu _G}\Big(\sum\limits_{n = 1}^\infty  { - \sum\limits_{n = m + 1}^\infty \Big) \sum\limits_{k = 1}^n {\sum\limits_{i = 1}^n {E({e^{ - r({\tau _i} + {D_i})}}{\mathbf{1}_{\{ N(t) = n,{\tau _i} + {D_i} \le t\} }}{\mathbf{1}_{\{ {X_k}{e^{ - r{\tau _k}}} \in (x,x + 1]\} }})} } } \nonumber\\
&:= {Q_{1121}} - {Q_{1122}}.
\end{align*}
It follows from Lemma \ref{lem:sum1} that
\begin{align*}
{Q_{1121}} ={\mu _G}\widetilde{\varphi}_{F}(x;t).
\end{align*}
For $Q_{1122}$, recalling that $\{X_1,\ldots,X_n\}$ and $\{N(t),t\geq0\}$ are independent and conditioning on $\tau_1$ and $(\tau_1,\tau_k-\tau_1)$, respectively, we have that, as $x\rightarrow\infty$ first and then $m\rightarrow\infty$,
\begin{align*}
{Q_{1122}} &\leq \sum\limits_{n = m + 1}^\infty  {\sum\limits_{k = 1}^n n {E\big({e^{ - r{\tau _1}}}{\mathbf{1}_{\{ N(t) = n\} }}{\mathbf{1}_{\{ {X_k}{e^{ - r{\tau _k}}} \in (x,x + 1]\} }}\big)} }  \nonumber\\
 &\leq\sum\limits_{n = m + 1}^\infty  \sum\limits_{k = 1}^n n\int_{{0^ - }}^t {\int_{{0^ - }}^{t - v}{ P(N(t-u-v)= n-k)F(x{e^{r(u + v)}},(x + 1){e^{r(u + v)}}]}}\nonumber\\ &\quad \times P(\tau_k-\tau_1\in \d u)P(\tau_1\in \d v) \nonumber\\
 &\leq \sum\limits_{n = m + 1}^\infty  \sum\limits_{k = 1}^n n\int_{{0^ - }}^t {\int_{{0^ - }}^{t - v}{ P(N(t-u-v)= n-k) c e^{ru}F(x{e^{rv}},(x + 1){e^{r v}}]}}\nonumber\\
 &\quad \times P(\tau_k-\tau_1\in \d u)P(\tau_1\in \d v) \nonumber\\
 &\leq c e^{rt}\sum\limits_{n = m + 1}^\infty n^2\int_{{0^ - }}^t {{ P(N(t-v)= n-1) F(x{e^{rv}},(x + 1){e^{r v}}]}}P(\tau_1\in \d v) \nonumber
 \end{align*}
 \begin{align*}
 &\leq c e^{rt}\int_{{0^ - }}^t {F(x{e^{rv}},(x + 1){e^{rv}}]}E\left[N(t-v)^2\mathbf{1}_{\{N(t-v)\geq m\}}\right] \lambda (\d v)\nonumber\\
 &\leq c e^{rt}E\left[N(t)^2\mathbf{1}_{\{N(t)\geq m\}}\right]\int_{{0^ - }}^t {F(x{e^{rv}},(x + 1){e^{rv}}]} \lambda (\d v)\nonumber\\
 &= o\(\int_{{0^ - }}^t {F(x{e^{rv}},(x + 1){e^{rv}}]} \lambda (\d v)\).
 \end{align*}
Combing the estimations of $Q_{1121}$ and $Q_{1122}$ yields that, as $x\rightarrow\infty$ first and then $m\rightarrow\infty$,
 \begin{align}\label{eq:Q112}
 Q_{112}={\mu _G}\widetilde{\varphi}_{F}(x;t)+o\(\int_{{0^ - }}^t {F(x{e^{rv}},(x + 1){e^{rv}}]} \lambda (\d v)\).
 \end{align}
For ${Q_{113}}$, by the same arguments as in the treatments of ${Q_{112}}$ and Lemma \ref{lem:sum2}, we get that, as $x\rightarrow\infty$ first and then $m\rightarrow\infty$,
\begin{align}\label{eq:Q113}
Q_{113}&=\mu_F\Big(\sum\limits_{n = 1}^\infty  { - \sum\limits_{n = m + 1}^\infty  \Big) \sum\limits_{k = 1}^n {\sum\limits_{i = 1}^n {E({e^{ - r{\tau _i}}}{\mathbf{1}_{\{ N(t) = n,{\tau _k} + {D_k} \le t\} }}{\mathbf{1}_{\{ {Y_k}{e^{ - r({\tau _k} + {D_k})}} \in (x,x + 1]\} }})} } }\nonumber\\
&= {\mu _F}\widetilde{\varphi}_{G}(x;t)+o\(\int_{{0^ - }}^t {\overline G(x{e^{ru}},(x+1){e^{ru}}]} (\lambda  * H)(\d u)\).
\end{align}
 Plugging \eqref{eq:Q111}-\eqref{eq:Q113} into \eqref{eq:Q11} yields that, as $x\rightarrow\infty$ first and then $m\rightarrow\infty$,
\begin{align*}
Q_{11}&=\mu_F\(\varphi_{F;\lambda,\lambda}(x;t)+\widetilde{\varphi}_{G}(x;t)\)+\mu_G\(\varphi_{G;\lambda  * H,\lambda}(x;t)+\widetilde{\varphi}_{F}(x;t)\)+o\(\Delta(x;t)\),
\end{align*}
which also implies that, as $x\rightarrow\infty$ first and then $m\rightarrow\infty$,
\begin{align}\label{eq:Q1*}
Q_{1}=\mu_F\(\varphi_{F;\lambda,\lambda}(x;t)+\widetilde{\varphi}_{G}(x;t)\)+\mu_G\(\varphi_{G;\lambda  * H,\lambda}(x;t)+\widetilde{\varphi}_{F}(x;t)\)+o\(\Delta(x;t)\)
\end{align}
by combining \eqref{eq:Q1}-\eqref{eq:Q13}. Next, to deal with $Q_2$, by Lemma \ref{lem:rkesten} and $I_1(x;t)$ in proof of Theorem \ref{the:claim}, as $x\rightarrow\infty$ first and then $m\rightarrow\infty$, it holds uniformly for all $t\in\Lambda_T$ that
\begin{align}\label{eq:Q2}
Q_2&\leq K\sum_{n=m+1}^\infty(1+\varepsilon)^n\(\sum_{k=1}^{n} P\({X_k}{e^{ - r{\tau _k}}}{\mathbf{1}_{\{ N(t) = n\} }} \in (x,x + 1]\)\right.\nonumber\\
&\quad\quad\left. + \sum_{k=1}^{n}P\({Y_k}{e^{ - r({t_k} + {D_k})}}{\mathbf{1}_{\{ N(t) = n,{\tau _k} + {D_k} \le t\} }} \in (x,x + 1]\)\)\nonumber\\
&=o\(\int_{{0^ - }}^t {F(x{e^{rv}},(x + 1){e^{rv}}]} \lambda (\d v)+\int_{{0^ - }}^t {\overline G(x{e^{ru}},(x+1){e^{ru}}]} (\lambda  * H)(\d u)\).
\end{align}
Finally, by some simple computations, it is clear that
\begin{align}\label{eq:Q3}
Q_{3}=\int_{{0^ - }}^t {\overline F(x{e^{ru}})} \lambda (\d u) + \int_{{0^ - }}^t {\overline G(x{e^{ru}})} (\lambda  * H)(\d u).
\end{align}
 Plugging \eqref{eq:Q1*}-\eqref{eq:Q3} into \eqref{eq:Q} leads to
\begin{align*}
P\(L_r(t)>x\)&=\int_{{0^ - }}^t {\overline F(x{e^{ru}})} \lambda (\d u) + \int_{{0^ - }}^t {\overline G(x{e^{ru}})} (\lambda  * H)(\d u)+\mu_F\(\varphi_{F;\lambda,\lambda}(x;t)+\widetilde{\varphi}_{G}(x;t)\)\nonumber\\
&\quad+\mu_G\(\varphi_{G;\lambda  * H,\lambda}(x;t)+\widetilde{\varphi}_{F}(x;t)\)+o\(\Delta(x;t)\).
\end{align*}
Therefore, the proof is completed.
\small

\normalsize
\appendix

\end{document}